%
%
\documentclass{llncs}
\usepackage{amsmath,amssymb}
\begin{document}
\newcommand{\WC}{\widehat{C}}
\newcommand{\WD}{\widehat{D}}
\newcommand{\WT}{\widehat{T}}
\newcommand{\WS}{\widehat{S}}

\title{Recognizing Read-Once Functions from Depth-Three Formulas}
%
%
\author{Alexander Kozachinskiy}
%
%
%
\institute{
National Research University Higher School of Economics\\ \email{akozachinskiy@hse.ru}
}

\maketitle              

\begin{abstract}
Consider the following decision problem: for a given  monotone Boolean function $f$ decide, whether $f$ is read-once.
  For this problem, it is essential how the input
  function $f$ is represented.
  On a negative side we have the following results. Elbassioni, Makino and
  Rauf (\cite{elbassioni2011readability}) proved that this problem is
  coNP-complete when $f$ is given by a depth-4 read-2 monotone Boolean formula.
  Gurvich (\cite{gurvich2010it}) proved that this problem is coNP-complete
  even when the input is the following expression: $C\lor D_n$,
  where $D_n = x_1 y_1 \lor \ldots  \lor x_n y_n$ and $C$ is a
  monotone CNF over the variables $x_1, y_1, \ldots, x_n, y_n$ (note that
  this expression is a monotone Boolean formula of depth 3;
  in \cite{gurvich2010it} nothing is said about the readability of $C$, but the proof is valid even if $C$ is read-2 and thus the entire formula is read-3). 

  On a positive side, from \cite{golumbic2009read} we know that there
  is a polynomial time algorithm to recognize read-once functions
  when the input is a monotone depth-2 formula (that is, a
  DNF or a CNF).
  There are even very fast algorithms for this problem
  (~\cite{golumbic2008improvement}).

  We show that we can test in polynomial-time whether a given
  expression $C\lor D$ computes a read-once function,
  provided that $C$ is a read-once monotone CNF and $D$ is a
  read-once monotone DNF and all the variables of $C$ occur also in $D$
  (recall that due to Gurvich, the problem is coNP-complete
  when $C$ is read-2). We also observe that from the so-called Sausage Lemma of Boros et al. (\cite{boros2009generating}) it follows that   the problem of recognizing read-once functions is coNP-complete when the input formula is depth-3 read-2. This  improves the result of \cite{elbassioni2011readability}
  in the depth and the result of \cite{gurvich2010it} in the
  readability of the input formula. Moreover, 
we prove a new variant of Sausage Lemma, which allows us to derive that read-once recognition is coNP-complete even for  depth-3 read-2 formulas of the form $\Psi\land D_n$, where $D_n$
  is  as above and $\Psi$ is a  $\bigwedge-\bigvee-\bigwedge$ depth-3
  read-once monotone Boolean formula.

\keywords{read-once functions, monotone Boolean functions, coNP-completeness}
\end{abstract}

\section{Introduction}

In this paper we study the following decision problem: decide, for a given monotone Boolean function $f$, whether $f$ is a read-once function (the latter
means that the function can be computed by a
monotone formula in which every
variable occurs only once). Of course, to specify the problem, we need to specify the representation of $f$. For some representations this problem turns out to be tractable. For example, it is known (see \cite{golumbic2009read}) that if $f$ is given by a monotone DNF (or, equivalently, CNF), then the corresponding problem can be solved in polynomial time. Golumbic, Mintz and Rotics (\cite{golumbic2008improvement}) gave quite fast algorithm for this problem which works  in time $O(nm)$, where $m$ is the length of the given DNF and $n$ is the number of variables. In  \cite{golumbic2008improvement} they also asked how hard is this problem when $f$ is represented in some other way, for example,  when $f$ is given by an arbitrary Boolean formula.

First of all, let us note that if the input is a monotone
Boolean formula, then the problem belongs to coNP.
This follows from the following theorem by Gurvich:
\begin{theorem}[\cite{gurvich1977repetition}]
\label{gurvich_criteria}
A monotone Boolean function $f$ is read-once iff every minterm $S$ of $f$ and every maxterm $T$ of $f$ intersect in exactly one point.
\end{theorem}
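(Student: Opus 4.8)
The statement is a biconditional, and I would treat its two directions quite differently. Throughout I identify a minterm $S$ of $f$ with the support of a minimal true point, and a maxterm $T$ with a minimal set of variables whose simultaneous falsification forces $f=0$ (equivalently, a minimal transversal of the minterms); I also assume every variable is relevant, discarding fictitious ones. First I would record the easy half of the numerical condition: for monotone $f$, any minterm $S$ and maxterm $T$ satisfy $|S\cap T|\ge 1$, since otherwise setting the variables of $S$ to $1$ and all others to $0$ would falsify all of $T$ yet keep $f=1$, contradicting that $T$ forces $f=0$. So the real content of the criterion is $|S\cap T|\le 1$.

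For the forward direction (read-once $\Rightarrow$ the intersection condition) I would induct on the structure of a read-once formula $\Phi$ for $f$. In the base case $\Phi$ is a single variable and the unique minterm and maxterm are equal singletons. For the inductive step, if the top gate is a $\bigvee$ combining read-once formulas on pairwise disjoint variable sets, then the minterms of $f$ are exactly the minterms of the parts, while each maxterm of $f$ is a disjoint union of one maxterm from each part; the $\bigwedge$ case is dual. Hence a minterm $S$ lives inside a single part $i$, and for any maxterm $T$ one has $S\cap T=S\cap T_i$, whose size is $1$ by the induction hypothesis.

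The reverse direction is the substantial one, and I would prove it by induction on the number $n$ of variables, building the read-once formula by a recursive decomposition. Introduce two graphs on the variable set: the co-occurrence graph $G$ of the minterms ($uv\in E(G)$ iff some minterm contains both $u,v$) and the co-occurrence graph $H$ of the maxterms. Each minterm is a clique of $G$ and each maxterm a clique of $H$; moreover the hypothesis forbids a pair from lying in both a common minterm and a common maxterm, so $G$ and $H$ are edge-disjoint, and in fact each minterm is an independent set of $H$ and each maxterm an independent set of $G$. If $G$ is disconnected, then, since every minterm is a connected clique, the minterms partition according to the components of $G$, and writing $f$ as the disjunction of its minterms exhibits $f=f_1\vee\dots\vee f_k$ on disjoint variable blocks; using the forward-direction description of maxterms one checks that each $f_j$ again satisfies the intersection condition, so the induction hypothesis applies and $f$ is read-once. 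Dually, if $H$ is disconnected one obtains $f=f_1\wedge\dots\wedge f_k$ and recurses.

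The crux — and the step I expect to be the main obstacle — is to show that for $n\ge 2$ at least one of $G,H$ must be disconnected, so the recursion never stalls, and simultaneously that the formula read off from this decomposition computes exactly $f$ (equivalently, that the minterm hypergraph is conformal and that $G$ is $P_4$-free). Both potential failures — a configuration in which both $G$ and $H$ are connected, and a clique of $G$ not contained in any single minterm — should be ruled out by extracting, from a minimal such obstruction (a short induced path in $G$, or a minimal non-conformal clique), an explicit minterm and maxterm that share two variables, contradicting $|S\cap T|\le 1$. Turning the purely local condition $|S\cap T|=1$ into this global cograph/conformality structure is precisely where the hypothesis must be used in full strength, and making the exchange argument on minterms and maxterms precise is the delicate part of the proof.
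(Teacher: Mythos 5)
The paper does not prove this statement at all: it is imported verbatim from Gurvich's 1977 paper (and the textbook treatment in Golumbic--Gurvich), so there is no in-paper proof to compare yours against. Judged on its own terms, your outline follows the standard route for this theorem: the observation that $|S\cap T|\ge 1$ always holds for monotone functions, the structural induction for the forward direction (minterms of a disjoint $\bigvee$ live in one part, maxterms are transversals picking one maxterm per part, and dually), and, for the converse, the recursive decomposition driven by the co-occurrence graphs $G$ and $H$ of the minterms and maxterms. All of that is correct as far as it goes.

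However, the proposal has a genuine gap exactly where you flag one: the converse direction is not actually proved. Everything in your argument up to the ``crux'' paragraph only uses the hypothesis to show $G$ and $H$ are edge-disjoint and to propagate the intersection condition to the blocks; the two claims that carry the entire weight of the theorem --- (a) for $n\ge 2$ at least one of $G,H$ is disconnected, so the $\lor/\land$ recursion never stalls, and (b) the formula assembled from the recursion computes $f$ and not some other function with the same co-occurrence graph --- are stated as goals, with only a promise that a ``minimal obstruction'' would yield a minterm and a maxterm meeting in two points. That exchange argument (equivalently: that $|S\cap T|=1$ forces every pair of relevant variables to lie in a common minterm or a common maxterm, so that $H=\overline{G}$, that $G$ is $P_4$-free, and that the minterm hypergraph is conformal, i.e.\ the maximal cliques of $G$ are precisely the minterms) is the substance of Gurvich's theorem; without it the proposal is a correct reduction of the theorem to its hardest lemma rather than a proof. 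To complete it you would need to exhibit the extraction explicitly, e.g.\ from an induced $P_4$ in $G$, or from a clique of $G$ properly containing no minterm, construct a specific minterm--maxterm pair with intersection of size at least two.
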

Thus to show that $f$ is \emph{not} read-once it is enough to demonstrate a minterm $S$ and a maxterm $T$ with $|S\cap T| > 1$ (it is not hard
to show that given a formula $S,T$, we can decide in polynomial time
whether $S$ is a minterm and $T$ is a maxterm).

Soon after  Golumbic, Mintz and Rotics raised their question, Elbassioni, Makino and Rauf (\cite{elbassioni2011readability}) proved that the read-once recognition problem is coNP-complete when the input function $f$ is given by a
depth-4 read-2 monotone Boolean formula. The same authors also proved that it is NP-hard to approximate the readability of the monotone Boolean function $f:\{0, 1\}^n\to \{0, 1\}$, given by a depth-4 monotone Boolean formula, within a factor of $O(n)$.  

Later, Gurvich (\cite{gurvich2010it}) proved, that the problem of recognizing read-once functions is coNP-complete even when the input is the following expression: $C\lor D_n$, where $D_n = x_1 y_1 \lor \ldots  \lor x_n y_n$ and $C$ is a monotone CNF over the variables $x_1, y_1, \ldots, x_n, y_n$. Note that the
entire formula $C\lor D_n$ is a depth-3 formula. The paper \cite{gurvich2010it} says nothing about the readability of $C$, but the proof is valid even if $C$ is  read-2 \footnote{This is due to the fact that the SAT-problem is NP-complete even for read-3 (non-monotone) CNFs.} (so that the whole expression is read-3).

Our main result shows that this problem becomes tractable,
if $C$ is read-once even when
 $D_n$ is any monotone read-once DNF:
\begin{theorem}
\label{upper_bound}
There is a polynomial-time algorithm which decides, whether
a given expression $C\lor D$ computes a read-once function,
provided that $C$ is a monotone read-once CNF, $D$ is a monotone read-once
DNF, and every variable of $C$ occurs also in $D$.
\end{theorem}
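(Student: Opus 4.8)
The plan is to apply Gurvich's criterion (Theorem~\ref{gurvich_criteria}): writing $f = C \lor D$, the function $f$ is read-once iff every minterm $S$ and every maxterm $T$ of $f$ satisfy $|S \cap T| = 1$. Since a minterm and a maxterm always meet, it suffices to certify that no pair meets in two or more variables. Write $C = A_1 \land \dots \land A_k$ with the clauses $A_i$ pairwise disjoint (read-once CNF) and $D = B_1 \lor \dots \lor B_m$ with the blocks $B_j$ pairwise disjoint (read-once DNF); by hypothesis every variable of each $A_i$ lies in some $B_j$. The first task is to describe the minterms and maxterms of $f$ purely in terms of this clause/block incidence. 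From $f = 1 \iff (\text{every } A_i \text{ is hit}) \lor (\text{some } B_j \text{ is contained})$ one reads off that the minterms are of two kinds: surviving blocks $B_j$, and surviving transversals of $C$ (sets choosing one variable from each $A_i$). Dually, $f = 0 \iff (\text{some } A_i \text{ is entirely } 0) \land (\text{every } B_j \text{ meets the } 0\text{-set})$, so every maxterm has the form $A_i \cup R$, where $R$ picks one representative from each block not already met by $A_i$.

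First I would pin down which of these candidates are genuine, i.e. minimal, since the two parts of $f$ absorb each other. A block $B_j$ fails to be a minterm exactly when it strictly contains a transversal of $C$, which (once the local condition below is imposed) happens precisely when $B_j$ meets every clause and carries no free variable; a transversal of $C$ fails to be a minterm exactly when it swallows a whole block $B_j$; and a candidate maxterm $A_i \cup R$ fails minimality exactly when the representatives in $R$ are forced to complete another clause $A_{i'}$, which is what happens around singleton blocks. These absorption phenomena are the source of the difficulty and must be tracked carefully: the naive obstructions one writes down are frequently spurious, because the minterm or the maxterm in question turns out not to be minimal.

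The key step is to convert ``$|S \cap T| \ge 2$ for some genuine $S,T$'' into a list of local conditions on the bipartite incidence between the clauses $\{A_i\}$ and the blocks $\{B_j\}$, checkable in polynomial time and without enumerating the exponentially many transversals. Testing a block-minterm $B_j$ against a maxterm through $A_i$ shows that one such condition is $|A_i \cap B_j| \le 1$ whenever $B_j$ is a genuine minterm and $A_i$ heads a genuine maxterm: a clause may not meet a surviving block in two variables. Testing a transversal-minterm against a maxterm through some $A_{i_0}$ produces a second family of conditions, governing when a transversal can be made to re-meet one of the maxterm's representatives $r_\ell \in B_\ell$ for $\ell$ outside the blocks touched by $A_{i_0}$; here the representatives and the transversal are both chosen adversarially, and the condition records exactly which clause/block incidences permit such a coincidence while keeping both sets minimal.

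The main obstacle is the sufficiency direction: I must prove that once all the local conditions hold, \emph{no} choice of transversals --- in the minterm and simultaneously in the representatives of the maxterm --- can manufacture a double intersection. Here I would exploit the read-once structure, namely that the $A_i$ are pairwise disjoint and the $B_j$ are pairwise disjoint, so that any shared variable of $S$ and $T$ lies in a uniquely determined clause and a uniquely determined block; this localizes each potential second intersection point to a single clause/block pair and lets the finitely many local conditions control the otherwise global, adversarial choices. The remaining care goes into the degenerate cases flagged above --- singleton blocks and blocks carrying free variables --- which decide whether an absorption actually occurs and hence whether a candidate obstruction is real. Assembling these checks yields the polynomial-time algorithm asserted in Theorem~\ref{upper_bound}.
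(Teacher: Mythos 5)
Your plan follows the same route as the paper: Gurvich's criterion, the classification of minterms into block-type (``right'') and transversal-type (``left'') sets with the absorption caveats, the observation that every maxterm contains a clause of $C$ plus representatives of the remaining blocks, and a reduction of ``some minterm and maxterm share two variables'' to polynomially many existence checks indexed by clauses, blocks and variables. The disjointness argument you invoke for sufficiency (a shared variable determines a unique clause and a unique block) is also exactly how the paper localizes the case analysis. So the architecture is right.

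There is, however, a genuine gap: you assert that each existence check is a ``local condition on the bipartite incidence between the clauses and the blocks, checkable in polynomial time,'' but you never say how, and for most of the checks this is not a local incidence condition at all. Deciding whether there exists a transversal of $C$ containing two prescribed variables $p,q$ and containing no block $D_j$ (needed to test a left minterm against a maxterm), or whether there exists a maxterm containing a prescribed clause $C_u$ (or $C_u\cup\{p\}$), are global questions; the simplest instance --- does any left minterm exist at all --- is precisely the question of whether $C\to D$ is a tautology, which is coNP-complete already for read-2 CNFs $C$. The paper's algorithm hinges on Lemma~\ref{polynomial_time_lemma}: implication of a read-once DNF by a read-once CNF can be tested in polynomial time because $C\land\lnot D$ rewrites as a read-2 CNF, whose satisfiability is decidable in polynomial time. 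Every step of the algorithm except the two-clause check of Lemma~\ref{two_clauses_lemma} (the one condition that genuinely is a local incidence condition, $|(C_u\cup C_v)\cap D_j|\le 1$) is implemented by constructing an auxiliary read-once CNF $\WC$ and read-once DNF $\WD$ from restrictions of $C$ and $D$ and invoking this subroutine (Lemmas~\ref{max_term_lemma_1}, \ref{minterm_lemma}, \ref{max_term_lemma_2}). Without identifying this enabling fact and carrying out those reductions, both the polynomial-time claim and the sufficiency direction of your argument remain unestablished.
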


We don't know whether the last restriction can be removed.  However, we find this theorem interesting due to its connection to the result of Gurvich.

We also observe that from the so-called Sausage Lemma of Boros et al. (\cite{boros2009generating}) it follows that   the problem of recognizing read-once functions is coNP-complete when the input formula is depth-3 read-2 (note that  \cite{elbassioni2011readability} requires depth at least 4 and  \cite{gurvich2010it} requires readability at least 3 for the input formula). Moreover, we may consider only formulas which are conjunctions of two read-once formulas.

In more detail, Sausage Lemma states that it is coNP-complete problem to decide whether $\Psi\to D$ is a tautology, where $\Psi$ is a $\bigwedge-\bigvee-\bigwedge$ depth-3 read-once monotone Boolean formula and $D$ is a read-once monotone DNF. Further, it is easy to show (using Theorem \ref{gurvich_criteria}) that $\Psi\to D$ is a tautology if and only if the following formula: $$\Psi \land (w_1 w_3 \lor w_2 w_4) \land (D \lor w_1 w_2 \lor w_3 w_4)$$ computes a read-once function. Here $w_1, w_2, w_3, w_4$ are fresh variables.

Using a completely different reduction, we prove that Sausage Lemma is true even when $D$ is of the same form as in the Gurvich's hardness result from  \cite{gurvich2010it}.

\begin{theorem}
\label{sausage_lemma}
The problem to decide whether $\Psi\to D_n$ is a tautology is coNP-complete. Here  
$D_n = x_1 y_1 \lor \ldots \lor x_n y_n$ and $\Psi$ is a  $\bigwedge-\bigvee-\bigwedge$ depth-3 read-once monotone Boolean formula over $x_1, y_1, \ldots, x_n, y_n$.
\end{theorem}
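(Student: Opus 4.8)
The plan is to prove coNP-hardness by a reduction from \textsc{3-Sat}; membership in coNP is immediate, since a satisfying assignment of $\Psi\land\lnot D_n$ is a polynomial-size certificate that $\Psi\to D_n$ is not a tautology, and such an assignment is checkable in polynomial time.

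First I would reformulate the question in terms of minterms. Write $\Psi=\bigwedge_{a}\bigvee_{b} t_{ab}$, where each $t_{ab}$ is a monomial and, by read-onceness, all the monomials $t_{ab}$ are pairwise variable-disjoint. Then the minterms of $\Psi$ are exactly the sets obtained by selecting one monomial per top clause $a$ and taking the union of their variables: such a union is true and minimal, and every true point contains one of them. Since both $\Psi$ and $D_n$ are monotone and the minterms of $D_n$ are precisely the pairs $\{x_i,y_i\}$, the implication $\Psi\to D_n$ is a tautology iff every minterm of $\Psi$ contains some pair $\{x_i,y_i\}$. Equivalently, $\Psi\to D_n$ is \emph{not} a tautology iff there is a choice of one monomial per top clause whose union contains at most one endpoint of every matching pair. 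Reading each pair $\{x_i,y_i\}$ as a ``not both'' constraint between the unique monomial containing $x_i$ and the unique monomial containing $y_i$, the negation of the tautology property becomes the combinatorial question: can one select one monomial from each top clause so that no two selected monomials are joined by a matching pair?

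Next I would build the gadgets. For a \textsc{3-Sat} instance with variables $p_1,\dots,p_m$ and clauses $C_1,\dots,C_k$, I use one top clause $V_j=A_j\lor B_j$ per variable (with $A_j$ meaning $p_j=1$ and $B_j$ meaning $p_j=0$), and one top clause $W_a=s_{a,1}\lor s_{a,2}\lor s_{a,3}$ per \textsc{3-Sat} clause, with $s_{a,r}$ standing for ``literal $\ell_{a,r}$ witnesses $C_a$''. All the actual Boolean variables are fresh and private to a single monomial, so $\Psi$ is read-once $\bigwedge-\bigvee-\bigwedge$ by construction. Consistency is then enforced solely through the matching $D_n$: if $\ell_{a,r}=p_j$ I add a matching pair joining a private variable of $s_{a,r}$ to a private variable of $B_j$ (forbidding ``$s_{a,r}$ selected'' together with ``$p_j=0$''), and symmetrically I join $s_{a,r}$ to $A_j$ when $\ell_{a,r}=\overline{p_j}$; finally I add one harmless pair inside each $V_j$ joining $A_j$ to $B_j$, which is automatically satisfied since only one monomial per clause is selected, and which guarantees that every monomial is nonempty and that both endpoints of every pair occur in $\Psi$. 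A routine check then shows that a conflict-free selection exists iff the \textsc{3-Sat} instance is satisfiable: the choices in the $V_j$ define a truth assignment, and the pair attached to each selected $s_{a,r}$ forces exactly that the witnessing literal of $C_a$ is true, so every clause is satisfied.

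The main obstacle is precisely the read-once restriction on $\Psi$. In a standard \textsc{Sat} reduction one propagates the truth value of a variable across all clauses it occurs in by \emph{reusing} its Boolean variable, but here each variable may occur only once in $\Psi$, so no such sharing is available. The key point of the construction is that \emph{all} cross-clause consistency can instead be carried by the matching pairs of $D_n$: since a monomial may contain arbitrarily many private variables, one can realize an essentially arbitrary graph of ``not both'' constraints between monomials by padding each monomial with enough fresh variables and linking them by matching pairs. Once this translation is in place, verifying that the reduction is correct and polynomial in size is routine.
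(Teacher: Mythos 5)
Your reduction is correct, but it is genuinely different from the one in the paper. The paper reduces from \textsc{Clique}: given $G$ and $k$ it sets up a three-player game (Alice and Bob against Merlin, in the style of GKS communication games) with one top-level conjunct of $\Psi$ per cell of a $k$-cell tape, one disjunct per vertex Alice may write, and an inner conjunction over Merlin's legal replies; the matching $D_n$ pairs the two variables $x^{i,u}_{j,v}$ and $x^{j,v}_{i,u}$ that encode the two indistinguishable histories leading to the same configuration, and a winning Alice--Bob strategy corresponds exactly to a falsifying assignment of $\Psi\to D_n$. You instead reduce from \textsc{3-Sat} with explicit variable and clause gadgets. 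The combinatorial core is the same in both arguments: a minterm of a read-once $\bigwedge-\bigvee-\bigwedge$ formula with nonempty pairwise-disjoint bottom monomials selects exactly one monomial per top conjunct, and since each monomial may be padded with arbitrarily many private variables, the pairs of $D_n$ can realize an arbitrary ``not both'' conflict graph on the monomials; the hardness then comes from ``pick one monomial per group avoiding all conflict edges.'' Your route is arguably more elementary and self-contained, while the paper's game formulation makes the connection to GKS-type games explicit. One point you should make fully precise in a write-up: since $\Psi$ must be a formula over exactly the variables $x_1,y_1,\dots,x_n,y_n$ of $D_n$, every private variable you introduce must be an endpoint of exactly one matching pair and every endpoint must occur in some monomial --- your device of creating one fresh private variable per (monomial, incident pair) and the extra $A_j$--$B_j$ pair does achieve this, but it is the load-bearing detail that keeps the instance in the required syntactic form, so it deserves an explicit verification rather than a parenthetical remark.
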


The same trick with  fresh variables allows us to derive the following
\begin{corollary}
\label{lower_bound}

It is a coNP-complete problem to decide whether a given
expression $\Psi\land D_n$ computes a read-once function. Here  
$D_n = x_1 y_1 \lor \ldots \lor x_n y_n$ and $\Psi$ is a  $\bigwedge-\bigvee-\bigwedge$ depth-3 read-once monotone Boolean formula over $x_1, y_1, \ldots, x_n, y_n$.
\end{corollary}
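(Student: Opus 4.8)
The plan is to reduce from the tautology problem shown coNP-complete in Theorem~\ref{sausage_lemma}, using exactly the fresh-variable gadget described above for the general Sausage Lemma. Membership in coNP is free: the expression $\Psi\land D_n$ is a monotone formula, so by the discussion following Theorem~\ref{gurvich_criteria} a short certificate of \emph{non}-read-onceness is a minterm $S$ and a maxterm $T$ with $|S\cap T|>1$, and both conditions are checkable in polynomial time. So all the work is in the hardness reduction.

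Given an instance $\Psi\to D_n$ of the problem in Theorem~\ref{sausage_lemma}, I would introduce four fresh variables $w_1,w_2,w_3,w_4$ and form
\[
\Phi \;=\; \underbrace{\bigl(\Psi\land(w_1w_3\lor w_2w_4)\bigr)}_{\Psi'}\;\land\;\underbrace{\bigl(x_1y_1\lor\cdots\lor x_ny_n\lor w_1w_2\lor w_3w_4\bigr)}_{D'_{n+2}}.
\]
The point is that this is \emph{the same} formula $\Psi\land(w_1w_3\lor w_2w_4)\land(D_n\lor w_1w_2\lor w_3w_4)$ from the gadget, merely regrouped: $\Psi'$ is again a $\bigwedge-\bigvee-\bigwedge$ read-once monotone formula (the clause $w_1w_3\lor w_2w_4$ on fresh variables is just one more $\bigvee-\bigwedge$ conjunct), and $D'_{n+2}$ is again a DNF of the form $u_1v_1\lor\cdots\lor u_mv_m$ with $m=n+2$ on pairwise distinct variables. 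Thus $\Phi$ has exactly the syntactic shape required by the Corollary and is clearly computable in polynomial time.

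It remains to prove the equivalence that $\Psi\to D_n$ is a tautology if and only if $\Phi$ is read-once; I would establish it through Theorem~\ref{gurvich_criteria}. Writing $A=w_1w_3\lor w_2w_4$ and $B=w_1w_2\lor w_3w_4$, the easy direction is a one-line simplification: if $\Psi\le D_n$ pointwise then $\Psi\le D_n\lor B$, hence $\Phi=\Psi\land A$, a conjunction of two read-once formulas on disjoint variable sets, which is read-once. For the converse I argue by contraposition. If some $\alpha$ has $\Psi(\alpha)=1$ but $D_n(\alpha)=0$, I pick a minterm $S_\Psi\subseteq\alpha^{-1}(1)$ of $\Psi$; by monotonicity $D_n$ vanishes on $S_\Psi$. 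I then claim that $S=S_\Psi\cup\{w_1,w_2,w_3\}$ is a minterm of $\Phi$ and $T=\{w_1,w_2\}$ is a maxterm of $\Phi$. Since $S\cap T=\{w_1,w_2\}$ has size $2$, Theorem~\ref{gurvich_criteria} certifies that $\Phi$ is not read-once.

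The main obstacle is verifying these two certificate claims. For $S$: it satisfies $\Psi$ (via $S_\Psi$), it satisfies $A$ (via $w_1w_3$), and it satisfies $D_n\lor B$ (via $w_1w_2$, which is needed precisely because $D_n$ is $0$ on $S_\Psi$); minimality then amounts to checking that dropping any element kills $\Phi$, the informative cases being that removing $w_2$ leaves $D_n\lor B=0$ while removing $w_1$ or $w_3$ leaves $A=0$. For $T$: I would use that the maxterms of a monotone conjunction are the inclusion-minimal maxterms among those of the individual conjuncts; here $\{w_1,w_2\}$ is a maxterm of $A$, and it contains no maxterm of $\Psi$ nor of $D_n\lor B$ (those are forced to hit variables outside $\{w_1,w_2\}$, e.g. the pair $\{w_3,w_4\}$), so it survives as a maxterm of $\Phi$. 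With both verifications in hand the equivalence, and hence the Corollary, follows.
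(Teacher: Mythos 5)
Your reduction and gadget are exactly the ones the paper uses (the formula $\Psi\land(w_1w_3\lor w_2w_4)\land(D_n\lor w_1w_2\lor w_3w_4)$, regrouped to match the required syntactic form), and the forward direction is identical, so the proposal is correct and follows the paper's approach. The one place you diverge is the backward direction: the paper assumes a read-once formula $\Phi$ equivalent to the gadget exists, substitutes the assignment witnessing $\Psi=1$, $D_n=0$ into $\Phi$, and thereby obtains a read-once formula for the four-variable function $(w_1w_3\lor w_2w_4)\land(w_1w_2\lor w_3w_4)$, contradicting its separate Lemma~\ref{simple_lemma}; you instead lift the same witness upward, exhibiting $S=S_\Psi\cup\{w_1,w_2,w_3\}$ as a minterm and $T=\{w_1,w_2\}$ as a maxterm of the \emph{full} formula and invoking Theorem~\ref{gurvich_criteria} directly. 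Both hinge on the same two sets $\{w_1,w_2,w_3\}$ and $\{w_1,w_2\}$; the paper's restriction argument is shorter because minimality checks are done once on a four-variable function, while yours avoids the (implicit) fact that restrictions preserve read-onceness at the cost of verifying minimality of $S$ and $T$ in the presence of $\Psi$ and $D_n$ --- which you do correctly (in particular you correctly use that $D_n$ vanishes on $S_\Psi$ so that dropping $w_2$ kills $D_n\lor B$, and that no maxterm of $\Psi$ or of $D_n\lor B$ fits inside $\{w_1,w_2\}$).
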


\textbf{Remark.} Corollary \ref{lower_bound} can be used to show that inapproximability result of Elbassioni, Makino and Raur is also true for depth-3 formulas. This is, however, can be done with the use of the result of Gurvich as well.

\section{Preliminaries}
A monotone Boolean formula (i.e., a $\land,\lor$-formula) $\Phi$ is called a
\emph{read-$k$ formula} if every variable occurs at most $k$ times in $\Phi$. A monotone Boolean function $f$ is called a \emph{read-$k$ function}
if there is a monotone  read-$k$ formula, computing $f$.
\emph{Readability} of a Boolean function $f$ (formula $\Phi$)
is the minimal $k$ such that function $f$ (formula $\Phi$) is read-$k$.

Assume that $f$ is a monotone Boolean function over the variables $x_1, \ldots, x_n$ and $S$ is a subset of $\{x_1, \ldots, x_n\}$. To simplify notation below let $f(S\to i)$ (here $i\in\{0, 1\}$) denote the value of $f$ when all the variables from $S$ are set to $i$ and all the variables from $\{x_1, \ldots, x_n\}\setminus S$ are set to $1 - i$.

A subset $S\subset \{x_1, \ldots, x_n\}$ is called a \emph{minterm} of $f$ if $f(S\to 1) = 1$ but for every proper subset $S^\prime$ of $S$ it holds that $f(S^\prime\to 1) = 0$. Similarly, a subset  $T\subset\{x_1, \ldots, x_n\}$ is called a \emph{maxterm} of $f$ if $f(T\to 0) = 0$ but for every proper subset $T^\prime$ of $T$ it holds that $f(T^\prime\to 0) = 1$.

Obviously, every minterm of $f$ intersects every $f$'s maxterm.

\section{Proof of Theorem \ref{upper_bound}}

Our algorithm uses the following lemma:
\begin{lemma}
\label{polynomial_time_lemma}
There exists a polynomial-time algorithm which for any
given read-once monotone CNF $C$ and
for any given read-once monotone DNF $D$ decides,
whether $C\to D$ is a tautology.
\end{lemma}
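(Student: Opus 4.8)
The plan is to reduce the tautology test to a question about orientations of a multigraph. First I would pass to the complement: $C\to D$ is a tautology if and only if $C\land\neg D$ is unsatisfiable. Writing $C=\bigwedge_i\bigl(\bigvee_{v\in P_i}v\bigr)$ and $D=\bigvee_j\bigl(\bigwedge_{v\in Q_j}v\bigr)$, the read-once hypothesis means that the clause-sets $P_i$ are pairwise disjoint and the term-sets $Q_j$ are pairwise disjoint, so each variable lies in at most one $P_i$ and at most one $Q_j$. Unsatisfiability of $C\land\neg D$ is exactly the statement that there is no assignment $\alpha$ with $\alpha(P_i)\ni 1$ for every $i$ (every clause receives a $1$) and $\alpha(Q_j)\ni 0$ for every $j$ (every term receives a $0$). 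Thus it suffices to decide in polynomial time whether such an $\alpha$ exists.

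Next I would encode these assignments as orientations. Form a multigraph $G$ whose vertices are the clauses $P_1,\dots,P_k$ together with the terms $Q_1,\dots,Q_m$, and for each variable $v$ put: an edge between $P_i$ and $Q_j$ if $v\in P_i\cap Q_j$; a loop at $P_i$ if $v$ occurs only in the clause $P_i$; a loop at $Q_j$ if $v$ occurs only in the term $Q_j$; and nothing if $v$ occurs in neither. The disjointness coming from read-once is precisely what guarantees that each variable contributes at most one edge or loop, so $G$ is well defined and of size linear in the input. I would then set up the correspondence: for $v\in P_i\cap Q_j$, orienting its edge toward $P_i$ means $\alpha(v)=1$ and toward $Q_j$ means $\alpha(v)=0$; a loop always supplies in-degree to its own vertex, corresponding to setting $v$ so as to serve the unique clause or term containing it. Under this correspondence a clause-vertex $P_i$ has in-degree at least $1$ exactly when some variable of $P_i$ is set to $1$, and a term-vertex $Q_j$ has in-degree at least $1$ exactly when some variable of $Q_j$ is set to $0$. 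Hence the desired $\alpha$ exists if and only if $G$ admits an orientation in which every vertex has in-degree at least $1$.

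Finally I would invoke the elementary orientation lemma: a (multi)graph admits an orientation with every in-degree at least $1$ if and only if every connected component contains a cycle (a loop counting as a cycle), equivalently every component has at least as many edges as vertices. The easy direction is a counting argument, since a tree component has one fewer edge than vertices and so forces some vertex to in-degree $0$; the reverse direction is the standard construction of orienting a cycle cyclically and the remaining spanning tree away from it. Combining the steps, $C\to D$ is a tautology if and only if some connected component of $G$ is a tree, which is testable in linear time by computing components and comparing edge and vertex counts. The real work is therefore the reduction itself; the main thing to get right is the bookkeeping of the correspondence between assignments and orientations, in particular the role of the loops (variables occurring in only one of $C$, $D$) and the observation that read-once disjointness is exactly what collapses the satisfiability question for this otherwise NP-flavoured CNF into a purely graph-theoretic orientation problem.
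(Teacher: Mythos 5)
Your argument is correct, but it takes a genuinely different route from the paper. The paper's proof is a two-line reduction: $\lnot(C\to D) = C\land\lnot D$ can be rewritten as a CNF in which every variable occurs at most twice (once positively, coming from the read-once $C$, and once negatively, coming from negating the read-once $D$), and it then invokes the known polynomial-time algorithm for satisfiability of read-2 CNFs as a black box. You instead re-derive the special case needed here from scratch: your translation of the constraints ``every clause of $C$ gets a $1$, every term of $D$ gets a $0$'' into an orientation problem on a multigraph is sound, the read-once disjointness is indeed exactly what makes each variable contribute at most one edge or loop, and the orientation lemma (in-degree $\ge 1$ everywhere iff no tree component) is standard and correctly applied; the edge cases (empty clauses or terms become isolated tree-component vertices, variables occurring in only one of $C$, $D$ become loops) all come out right. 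The only cosmetic imprecision is the phrase ``exactly when'' in the assignment--orientation correspondence, since a loop forces in-degree regardless of how the lone variable is set; but the two existence statements you actually need do match up, as your argument for each direction shows. What each approach buys: the paper's is shorter and defers to the literature, while yours is self-contained, yields an explicitly linear-time test, and in effect reconstructs the classical graph-theoretic proof that two-occurrence SAT is tractable in this monotone special case.
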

\begin{proof}
It is known (see, e.g., \cite{buning1999propositional})  that there is a polynomial-time algorithm to decide, whether a given read-2 CNF is satisfiable. Apply this algorithm to $\lnot(C\to D) =  C\land \lnot D$ (the latter can be re-written as a read-2 CNF in polynomial time).
\qed
\end{proof}

Let $\{x_1, \ldots, x_n\}$ be variables occurring in $D$.
Let $C_1, \ldots, C_m$ denote the clauses of $C$. Since $C$ is read-once, we may identify $C_1, \ldots, C_m$ with  $m$ disjoint subsets of $\{x_1,\ldots, x_n\}$.
The same thing can be done for $D$. Let
$$D = D_1 \lor D_2 \lor \ldots \lor D_l,$$
where $D_1, \ldots, D_l \subset \{x_1, \ldots, x_n\}$ are disjoint conjunctions. Note also that $D_1\cup\ldots \cup D_l = \{x_1, \ldots, x_n\}$. 

We provide first a description of minterms of $C\lor D$. Let $S$ be
a subset of $\{x_1, \ldots, x_n\}$. We say that  $S$ is
a \emph{right set} if for some $j\in\{1, \ldots, l\}$ we have $S = D_j$.
We say that $S$ is a \emph{left set} if  $S\subset C_1\cup \ldots \cup C_m$
and for every $i\in\{1, \ldots, m\}$ it holds that $|C_i\cap S| = 1$.
The following lemma is straightforward.
\begin{lemma}
\label{left_right_lemma}
A set $S$ is a minterm of $C\lor D$ if and only if
$S$ is a left set
that does not properly include any right set or
$S$ is a right set
that does not properly include any left  set.
\end{lemma}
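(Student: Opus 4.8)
The plan is to translate the minterm condition into the combinatorial structure of $C$ and $D$ by evaluating $C\lor D$ under the assignment $S\to 1$ (variables of $S$ set to $1$, all others to $0$). Under this assignment the CNF $C=\bigwedge_i\bigvee_{x\in C_i}x$ evaluates to $1$ exactly when $S$ meets every clause, that is $C_i\cap S\neq\emptyset$ for all $i$, while the DNF $D=\bigvee_j\bigwedge_{x\in D_j}x$ evaluates to $1$ exactly when $S$ contains some conjunction, that is $D_j\subseteq S$ for some $j$. Hence
\[
(C\lor D)(S\to 1)=1 \iff \bigl(\forall i\colon C_i\cap S\neq\emptyset\bigr)\ \lor\ \bigl(\exists j\colon D_j\subseteq S\bigr),
\]
so that $S$ makes $C\lor D$ true iff $S$ is a hitting set of the family $\{C_i\}$ or a superset of some $D_j$.

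Next I would identify the inclusion-minimal sets satisfying each disjunct. Since the clauses $C_1,\dots,C_m$ are pairwise disjoint, a minimal hitting set of $\{C_i\}$ must be contained in $C_1\cup\dots\cup C_m$ and meet each $C_i$ in exactly one point; these are precisely the \emph{left sets}, and they are exactly the minterms of $C$ regarded as a function. Likewise, since the conjunctions $D_1,\dots,D_l$ are pairwise disjoint, the minimal supersets of some $D_j$ are the $D_j$ themselves, so the minimal sets for the second disjunct are precisely the \emph{right sets}, and they are exactly the minterms of $D$.

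I would then invoke the general fact that, for monotone $f=C\lor D$, a set $S$ is a minterm of $f$ iff $S$ is a minterm of $C$ or of $D$ and $S$ properly includes no minterm of $C$ or $D$. The forward direction follows because $(C\lor D)(S\to 1)=1$ forces $C(S\to 1)=1$ or $D(S\to 1)=1$, so $S$ contains a minterm of one of them, which by minimality of $S$ equals $S$; and no proper subset of $S$ may contain a minterm of either, else $C\lor D$ would already be true there. The converse is a direct check using the same evaluation. Combining this with the previous paragraph, $S$ is a minterm of $C\lor D$ iff $S$ is a left set or a right set and $S$ includes no left or right set properly.

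Finally I would use disjointness to simplify the minimality condition into the stated cross-conditions. All left sets have exactly $m$ elements (one per clause), so no left set can properly include another; and the right sets $D_1,\dots,D_l$ are pairwise disjoint, so no right set can properly include another. Consequently a left set is minimal iff it properly includes no right set, and a right set is minimal iff it properly includes no left set, which is exactly the characterization in Lemma~\ref{left_right_lemma}. The only nonroutine step is the general minterm-of-a-disjunction fact of the third paragraph; everything else is bookkeeping driven by the read-once (hence disjointness) hypothesis, which is why the statement is marked as straightforward.
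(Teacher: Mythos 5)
Your proof is correct: the paper itself gives no argument for this lemma (it is simply declared straightforward), and your derivation --- reducing to minimal hitting sets of the disjoint clauses, minimal supersets of the disjoint conjunctions, and the general fact that minterms of a disjunction are the inclusion-minimal elements among the minterms of the disjuncts --- is exactly the routine verification the paper intends. The final simplification using that left sets all have size $m$ and right sets are pairwise disjoint correctly yields the stated cross-conditions.
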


A minterm $S$ of $C\lor D$ is called \emph{a left minterm} if $S$ is
a left set. Similarly, we call $S$ \emph{a right minterm}
if $S$ is a right set.

Now we are ready to present the algorithm for Theorem \ref{upper_bound}.
In the description of the algorithm we will state several auxiliary lemmas
whose proofs are deferred to Appendix.

\medskip
\textbf{Algorithm.} The algorithm works in four steps.

\emph{Step 1.} Check, using Lemma \ref{polynomial_time_lemma}, whether $C\to D$ is a tautology. If it is, then $C\lor D$ is equivalent to $D$ and hence
$C\lor D$ computes a read-once function and the algorithm halts.
Otherwise proceed to Step 2.

\emph{Step 2.}
Obviously every maxterm $T$
includes at least one clause of $C$.
For every pair of distinct clauses $C_u, C_v$ check
whether there is a maxterm $T$ of $C\lor D$ such that $C_u, C_v\subset T$.
This can be done in  polynomial time by 
the following
\begin{lemma}
\label{two_clauses_lemma}
Let $C_u, C_v$ be two distinct clauses of $C$. Then there exists a maxterm $T$ of $C\lor D$ such that $C_u, C_v \subset T$ if and only if for every $j\in\{1, \ldots, l\}$ it holds that $|(C_u\cup C_v) \cap D_j| \le 1$.
\end{lemma}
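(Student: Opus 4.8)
The plan is to first translate the condition $f(T \to 0) = 0$ for $f = C \lor D$ into purely combinatorial language and then read off the maxterms as the minimal such sets. Unwinding the definitions, $C(T\to 0)=0$ holds exactly when $T$ contains some clause $C_i$ entirely, while $D(T\to 0)=0$ holds exactly when $T$ meets every conjunction $D_j$. Hence $f(T\to 0)=0$ if and only if $T$ contains at least one clause of $C$ and $T\cap D_j\neq\emptyset$ for every $j\in\{1,\ldots,l\}$; call any such $T$ a \emph{blocking set}, so that a maxterm is precisely a minimal blocking set. Throughout I will use that the clauses $C_1,\ldots,C_m$ are pairwise disjoint and that $D_1,\ldots,D_l$ are pairwise disjoint and partition $\{x_1,\ldots,x_n\}$, so every variable lies in exactly one $D_j$.

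For the ``only if'' direction, suppose $T$ is a maxterm with $C_u\cup C_v\subseteq T$ and assume toward a contradiction that $|(C_u\cup C_v)\cap D_j|\ge 2$ for some $j$. I would pick a variable $a\in(C_u\cup C_v)\cap D_j$; since $C_u,C_v$ are distinct, hence disjoint, clauses, $a$ lies in exactly one of them, say $a\in C_u$, and then the whole clause $C_v$ still sits inside $T\setminus\{a\}$, so the clause condition survives. The only conjunction that could lose its hitter when $a$ is deleted is $D_j$ (as $a$ belongs to no other $D_{j'}$), but the assumption provides a second variable of $C_u\cup C_v\subseteq T$ inside $D_j$, so $T\setminus\{a\}$ still meets $D_j$. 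Thus $T\setminus\{a\}$ is again a blocking set, contradicting the minimality of the maxterm $T$.

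For the ``if'' direction, assume $|(C_u\cup C_v)\cap D_j|\le 1$ for every $j$. I would build an explicit blocking set containing $C_u\cup C_v$: start from $C_u\cup C_v$, which already contains the clause $C_u$, and for each $D_j$ disjoint from $C_u\cup C_v$ add one arbitrary variable $z_j\in D_j$; the resulting set $T$ meets every $D_j$ and so is a blocking set. I would then shrink $T$ to a maxterm $T^{*}\subseteq T$. The key point, and the step I expect to be the main obstacle, is showing that no variable of $C_u\cup C_v$ is lost during this shrinking. For this I would prove that each $x\in C_u\cup C_v$ is the \emph{unique} element of $T$ in its conjunction $D_{j(x)}$: the hypothesis forces $(C_u\cup C_v)\cap D_{j(x)}=\{x\}$, and by construction no $z_{j(x)}$ was added to the already-hit set $D_{j(x)}$. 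Being the sole hitter of $D_{j(x)}$ within $T$ is inherited by every subset of $T$ that still contains $x$, so any blocking subset of $T$, and in particular the maxterm $T^{*}$, must retain $x$ in order to meet $D_{j(x)}$. Hence $C_u\cup C_v\subseteq T^{*}$, which is the maxterm we sought.
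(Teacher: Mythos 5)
Your proof is correct and follows essentially the same route as the paper: the same one-element deletion argument (keeping the other clause and using the second hitter of $D_j$) for the forward direction, and the same construction $C_u\cup C_v$ plus one fresh variable per untouched $D_j$ for the converse. The only cosmetic difference is that the paper observes this set is already a maxterm (every proper subset misses some $D_j$), whereas you shrink to a maxterm and note that the unique-hitter property forces $C_u\cup C_v$ to survive; these are the same observation.
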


If there is such $T$, then  $C\lor D$ is not read-once.
Indeed, consider any 
minimal $S_0 \subset\{x_1, \ldots, x_n\}$ such that $C(S_0 \to 1) = 1$
and $D(S_0 \to 1) = 0$. Such a set exists, since $C\to D$ is not a tautology. 
As $D(S_0 \to 1) = 0$, the set $S_0$ does not include
any right set, and hence is a left minterm of $C\lor D$.
As, $C(S_0 \to 1) = 1$,
the set $S_0$ intersects both  $C_u$ and $C_v$. Recall that
$C_u, C_v\subset T$ and hence $|T\cap S_0|\ge2$. By 
Theorem~\ref{gurvich_criteria}, 
$C\lor D$ does not compute a read-once function.

Otherwise (if there is no
maxterm $T$ of $C\lor D$ that includes distinct clauses of $C$)
we proceed to Step 3.

\emph{Step 3.}
For every clause $C_u$ and for every pair of distinct variables $p$
and $q$ from $C_u$ we check:
\begin{itemize}
\item whether there is a right minterm $S$ such that $\{p, q\} \subset S$
  (this can be done in polynomial time since there are only polynomially many right minterms);
\item whether there is a maxterm $T$ containing  $C_u$.
\end{itemize}

The second check can be done in polynomial time
using the following 
\begin{lemma}
\label{max_term_lemma_1}
Assume that $C\to D$ is not a tautology and no maxterm of $C\lor D$ contains two distinct clauses of $C$.
Then for any clause $C_i$
we can decide in polynomial-time  
whether there exists a maxterm $T$ of $C\lor D$ such that $C_i\subset T$. 
\end{lemma}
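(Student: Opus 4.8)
The plan is to reduce the existence of a maxterm containing $C_i$ to a satisfiability question for a read-$2$ CNF, which can then be settled in polynomial time exactly as in Lemma~\ref{polynomial_time_lemma}. First I would pin down the shape of the maxterms we are after. A set $T$ falsifies $C\lor D$ precisely when $T$ contains some clause $C_k$ (so that $C(T\to 0)=0$) and $T$ meets every term $D_j$ (so that $D(T\to 0)=0$), and a maxterm is a minimal such $T$. Under the standing hypothesis that no maxterm contains two distinct clauses, any maxterm $T$ with $C_i\subseteq T$ has $C_i$ as its \emph{only} clause. Setting the variables of $C_i$ to $0$ already kills every term $D_j$ meeting $C_i$, so the only remaining freedom is a minimal choice of variables hitting the terms $D_j$ that are disjoint from $C_i$, subject to not completing any other clause $C_k$. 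Since $D$ is read-once the terms $D_1,\ldots,D_l$ are pairwise disjoint, so a minimal hitting set of the terms disjoint from $C_i$ picks exactly one variable from each; one then checks that $T=C_i\cup H$ is indeed minimal, because dropping a variable of $C_i$ revives $C$ (no other clause lies inside $T$) and dropping a representative revives the corresponding term.

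This reduces the problem to the following question: can we choose, for each term $D_j$ disjoint from $C_i$, at least one of its variables to set to $0$, in such a way that no clause $C_k\ne C_i$ gets entirely set to $0$? I would encode this as a CNF $\Phi$ with a Boolean variable $b_x$ for every variable $x\notin C_i$, to be read as ``$x\in T$''. For each term $D_j$ disjoint from $C_i$ I add the positive clause $\bigvee_{x\in D_j} b_x$ (the term must be hit), and for each clause $C_k\ne C_i$ I add the negative clause $\bigvee_{x\in C_k}\lnot b_x$ (the clause must not be completed). The crucial observation is that $\Phi$ is read-$2$: since $D$ is read-once, $x$ lies in a single term, so $b_x$ occurs at most once positively; since $C$ is read-once, $x$ lies in at most one clause, so $b_x$ occurs at most once negatively. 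Hence the satisfiability of $\Phi$ is decidable in polynomial time by the very read-$2$ SAT algorithm invoked in the proof of Lemma~\ref{polynomial_time_lemma}.

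It remains to verify that $\Phi$ is satisfiable if and only if $C\lor D$ has a maxterm containing $C_i$. For the easy direction I would take a satisfying assignment, set $H=\{x\notin C_i : b_x=1\}$, observe that $C_i\cup H$ falsifies $C\lor D$ and contains no clause other than $C_i$, and then shrink $H$ down to a minimal transversal of the terms disjoint from $C_i$; the shrinking never removes a variable of $C_i$ and never creates a new clause, so the outcome is a maxterm containing $C_i$. For the converse I would take a maxterm $T\supseteq C_i$, invoke the hypothesis to conclude that $C_i$ is its unique clause, and read off a satisfying assignment from $b_x=1\iff x\in T$. I expect this converse to be the main obstacle, since it is exactly where the assumption ``no maxterm contains two distinct clauses'' is indispensable: without it, a maxterm containing $C_i$ might also contain some $C_k$, which would violate the negative clause for $C_k$ and destroy the equivalence. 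The rest is routine bookkeeping, namely confirming minimality of the constructed set and that every variable of $\Phi$ ranges over $\{x : x\notin C_i\}$.
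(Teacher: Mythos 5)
Your proposal is correct and takes essentially the same approach as the paper: both rest on the observation that a maxterm containing $C_i$ is exactly $C_i$ together with a one-point transversal of the terms disjoint from $C_i$ that completes no other clause of $C$, and both decide existence by a reduction to read-$2$ SAT. The paper merely packages the same read-$2$ CNF as the non-tautology of an auxiliary implication $\widehat{C}\to\widehat{D}$ (so it can reuse Lemma~\ref{polynomial_time_lemma}), and drops from $\widehat{C}$ the clauses that cannot be completed anyway, whereas you keep them as harmless extra negative clauses.
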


If for some $C_u, p, q$ both questions answer in positive,
then $C\lor D$ is not a read-once function.
Indeed, in this case both the minterm $S$ and the maxterm $T$
include distinct variables $p$
and $q$, and  by 
Theorem~\ref{gurvich_criteria}, 
$C\lor D$ does not compute a read-once function.
Otherwise we proceed to Step 4.

\emph{Step 4.} For all $u\in \{1, \ldots, m\}$ and
$v\in\{1, \ldots, l\}$ with
$C_u \cap D_v = \varnothing$, and for all $p\in D_v$ and $q\in C_u$
we check:

\begin{itemize}
\item whether there is a left minterm $S$ such that $\{p, q\} \subset S$,
\item whether there is a maxterm $T$ such that $C_u\cup \{p\} \subset T$.
  \end{itemize}
Both checks can be performed in polynomial time
by the following lemmas.
\begin{lemma}
\label{minterm_lemma}
There exists a polynomial-time algorithm which for any
given pair of distinct variables $a, b\in \{x_1, \ldots, x_n\}$ decides, whether there is a left minterm $S$ of $C\lor D$ such that $\{a, b\} \subset S$.
\end{lemma}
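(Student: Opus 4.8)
The plan is to reduce the question to a transversal-selection problem over the clauses of $C$ and then to decide that problem by a reduction to $2$-satisfiability, in the same spirit as Lemma~\ref{polynomial_time_lemma}. First I would dispose of the trivial cases. A left set meets every clause $C_i$ in exactly one point, so if $a$ or $b$ fails to occur in $C_1\cup\dots\cup C_m$, or if $a$ and $b$ lie in the same clause, then no left set contains both and the answer is ``no''. Otherwise let $a\in C_{i_a}$ and $b\in C_{i_b}$ with $i_a\neq i_b$; a left set containing $\{a,b\}$ is exactly a transversal of $C_1,\dots,C_m$ whose representatives of $C_{i_a}$ and $C_{i_b}$ are fixed to $a$ and $b$. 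By Lemma~\ref{left_right_lemma} such a transversal $S$ is a left minterm if and only if it does not properly contain any right set $D_j$.

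Next I would identify which right sets can ever be an obstruction. Since $S\subseteq C_1\cup\dots\cup C_m$ and $|S\cap C_i|=1$ for all $i$, a right set $D_j$ can satisfy $D_j\subseteq S$ only if all its variables are variables of $C$ and $|D_j\cap C_i|\le 1$ for every $i$; call such $D_j$ \emph{relevant}. For a relevant $D_j$ the containment $D_j\subseteq S$ holds precisely when, for each clause $C_i$ that $D_j$ meets, the chosen representative equals the variable of $D_j$ in $C_i$. Because $D$ is read-once the right sets are pairwise disjoint, so each variable of $C$ lies in at most one relevant $D_j$; this disjointness is what keeps the selection problem tractable. A relevant $D_j$ whose variables all lie in $C_{i_a}\cup C_{i_b}$ (and agree there with $a$ and $b$) is contained in every admissible $S$, so if such a $D_j$ is a proper subset the answer is immediately ``no''. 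All remaining relevant right sets must be avoided by the choices made in the free clauses $C_i$ with $i\notin\{i_a,i_b\}$.

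Finally I would encode the surviving constraints as a $2$-CNF and invoke the algorithm behind Lemma~\ref{polynomial_time_lemma}. Fixing the representatives of $C_{i_a}$ and $C_{i_b}$ makes these into forced singleton choices, and for each relevant $D_j$ the requirement ``$S$ does not fully select $D_j$'' becomes the constraint $\bigvee_i (\text{representative of } C_i \neq d_{j,i})$, the disjunction ranging over the clauses $C_i$ met by $D_j$, where $d_{j,i}$ is the variable of $D_j$ in $C_i$. When a free clause has exactly two candidates, each literal ``representative $\neq d_{j,i}$'' is genuinely binary, so these constraints have the form of $2$-clauses.

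The main obstacle, and the place where disjointness is essential, is to show that this system together with the exactly-one-per-clause condition is faithfully captured by a polynomially solvable (ideally $2$-CNF, equivalently read-$2$) instance. The problem is not merely a union of independent per-clause tests: forced choices can propagate and collide, as when one clause is compelled to avoid a value needed to kill a second colour while a third clause forces the opposite, producing a genuine two-literal conflict. Clauses with three or more candidates, on the other hand, provide enough slack that their avoidance requirements can always be met disjunctively, and the argument must verify that this slack lets the encoding stay within a tractable fragment while never selecting more than one representative per clause. Once the faithfulness of the encoding is established—so that a satisfying assignment yields a left minterm containing $\{a,b\}$ and conversely—the instance is decided in polynomial time, which gives the lemma.
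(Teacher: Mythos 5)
Your reduction has the right skeleton --- the trivial cases, the reformulation as ``pick one representative per clause of $C$, with the representatives of $C_{i_a},C_{i_b}$ forced to $a,b$, while not covering any right set $D_j$'' --- but the proof stops exactly where the work is. You explicitly defer the key step (``once the faithfulness of the encoding is established\dots''), and the tractable fragment you aim for is the wrong one. The avoidance constraint for a right set $D_j$ is $\bigvee_{x\in D_j}(\text{representative}\neq x)$, a clause of width $|D_j|$, and the per-clause covering constraint $\bigvee_{x\in C_i}x$ has width $|C_i|$; neither is bounded, so the system is not a $2$-CNF, and your parenthetical ``($2$-CNF, equivalently read-$2$)'' conflates two different classes: $2$-CNF bounds clause width, read-$2$ bounds the number of occurrences of each variable. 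Your subsequent worry about propagation, collisions, and clauses with ``three or more candidates'' is a symptom of chasing the width-$2$ formulation, which cannot be made to work directly.

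The observation that closes the gap --- and is the entire content of the paper's proof --- is that the system is a \emph{read-2} CNF, and read-2 SAT is the problem already shown polynomial in Lemma~\ref{polynomial_time_lemma}. Concretely: substitute $a=b=1$ in $C\lor D$ to get $\WC\lor\WD$, where $\WC$ drops the clauses containing $a$ or $b$ and $\WD$ drops the variables $a,b$; then a left minterm containing $\{a,b\}$ exists iff $\WC\land\lnot\WD$ is satisfiable, i.e.\ iff $\WC\to\WD$ is not a tautology. Since $C$ is read-once, each variable occurs at most once in $\WC$; since $D$ is read-once, each occurs at most once in $\lnot\WD$ (which is a CNF of negative clauses, one per $D_j$); so the conjunction is read-2 regardless of clause widths, and the disjointness of the $D_j$'s that you correctly flagged as essential is used exactly here. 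One still has to verify faithfulness in both directions (minimality of a satisfying set forces $|S\cap C_i|=1$ for the surviving clauses, and the forced representatives $a,b$ handle $C_{i_a},C_{i_b}$ and any $D_j$ they meet), which the paper does and you do not. As written, the proposal identifies the right question but does not answer it.
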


\begin{lemma}
  \label{max_term_lemma_2}
  Assume that
  there is no maxterm of $C\lor D$ which contains two distinct clauses of $C$.
  Then for any  given $C_u,D_v$ with $C_u \cap D_v = \varnothing$
  and for any given
  $p \in D_v$ we can decide in polynomial time
  whether there exists a maxterm $T$ of $C$ such that
  $C_u\cup \{p\} \subset T$. \footnote{The proof of this lemma is almost identical to the proof of Lemma \ref{max_term_lemma_1}. Actually, it is possible to formulate a single lemma which implies both of them, but then the formulation of the lemma becomes immense.}
\end{lemma}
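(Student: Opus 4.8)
The plan is to reduce the question to a feasibility problem for a network flow (equivalently, a bipartite $b$-matching), mirroring the proof of Lemma~\ref{max_term_lemma_1}. First I would record the elementary description of the maxterms of $C\lor D$: a set $T$ satisfies $(C\lor D)(T\to 0)=0$ exactly when some clause $C_i$ is contained in $T$ and $T$ meets every term $D_j$, and a maxterm is a minimal such $T$. Call a term $D_j$ \emph{covered} if $D_j\cap C_u\ne\varnothing$ and \emph{uncovered} otherwise; note that $D_v$ is uncovered and $p\in D_v$. Using the standing assumption that no maxterm contains two distinct clauses, I would argue that every maxterm $T$ with $C_u\cup\{p\}\subseteq T$ has the shape $T=C_u\cup R$, where $R$ contains exactly one variable from each uncovered term (the chosen variable of $D_v$ being $p$) and none from the covered terms; moreover such a $T$ is actually a maxterm if and only if $R$ contains no clause of $C$. (The two inclusions here are the routine part: a variable of $C_u$ lying in a term hit only by $C_u$ is forced by the transversal condition; a variable of $C_u$ lying in a doubly hit term is forced by the single-clause assumption unless $R$ already carries a whole clause; and each variable of $R$ is the unique hitter of its term.) Thus the lemma is equivalent to deciding whether one can pick one variable per uncovered term, with $p$ fixed in $D_v$, so that the resulting $R$ contains no clause.

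Second -- and this is the heart of the argument -- I would turn ``avoid completing a clause'' into a flow condition. Call $C_k$ a \emph{candidate} clause if all its variables lie in uncovered terms, it has at most one variable in each term, and its variable in $D_v$ (if any) is $p$; a short case analysis shows that no other clause can ever be a subset of $R$, so only candidate clauses matter. For a candidate $C_k$ let $r_k^j$ denote its unique variable in an uncovered term $D_j$. The key observation is that $R$ contains no candidate clause iff every candidate $C_k$ can be \emph{broken} at some uncovered term $j\ne v$ of its support by choosing $t_j\ne r_k^j$, and that a single term $D_j$ can simultaneously break a set $S_j$ of candidates precisely when $|S_j|\le |D_j|-1$ (the variables $r_k^j$ are distinct across $k$ because the clauses of $C$ are pairwise disjoint, so one only needs $D_j$ not to be exhausted by them). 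Hence a valid $R$ exists iff one can assign to each candidate clause a breaking term in its support, respecting the capacity $|D_j|-1$ at each term $D_j$ and forbidding $D_v$ as a breaking term.

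This last condition is exactly the existence of a feasible integral flow, saturating every clause, in the bipartite network whose sources are the candidate clauses (demand $1$ each), whose sinks are the uncovered terms $D_j$ with $j\ne v$ (capacity $|D_j|-1$), and which joins a clause to the terms of its support; such a flow can be found in polynomial time. I expect the delicate part to be the verification of the equivalence in the previous paragraph in both directions -- in particular, checking that a feasible flow really yields a clause-free $R$ (a chosen $t_j$ may coincide with the required variable of a candidate \emph{not} broken at $j$, but that candidate is broken elsewhere, so no clause is completed), and that degenerate candidates, such as a clause equal to $\{p\}$ or a clause all of whose support terms are singletons, are correctly reported as infeasible because they admit no breaking term. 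Since every remaining step (enumerating the candidates, building the network, and solving the flow) is plainly polynomial, this yields the desired algorithm.
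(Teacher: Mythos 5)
Your proposal is correct, and its first half coincides with what the paper does: both arguments begin by observing that a maxterm $T\supseteq C_u\cup\{p\}$ must have the form $C_u\cup R$, where $R$ picks exactly one variable from each term of $D$ disjoint from $C_u$ (with $p$ chosen in $D_v$), and that such a $T$ is a maxterm precisely when $R$ contains no clause of $C$ --- your case analysis of which clauses can possibly sit inside $R$ (all variables in uncovered terms, at most one per term, $D_v$-variable equal to $p$ if present) matches the paper's splitting of the clauses into $C_2,\dots,C_s$ versus $C_{s+1},\dots,C_m$ and its passage to the reduced clauses $C_k\setminus\{p\}$. Where you genuinely diverge is in how the resulting transversal problem is decided. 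The paper packages it as the question of whether $\WC\to\WD$ is a tautology, where $\WC$ is the read-once CNF of reduced candidate clauses and $\WD$ the read-once DNF of the remaining terms, and then simply invokes Lemma~\ref{polynomial_time_lemma} (read-2 CNF satisfiability), reusing machinery already set up for Step~1 of the algorithm. You instead prove a Hall-type characterization --- a clause-free transversal exists iff each candidate clause can be assigned a ``breaking'' term $D_j$ ($j\ne v$) in its support with at most $|D_j|-1$ clauses per term --- and decide this by bipartite $b$-matching/flow. Your equivalence is sound in both directions (distinctness of the $r_k^j$ within a term follows from the clauses of $C$ being disjoint, and the degenerate cases such as $C_k=\{p\}$ are correctly infeasible), so the argument works; it is somewhat more self-contained and gives an explicit combinatorial criterion, at the cost of redoing work that the paper gets for free from its earlier lemma.
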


If for some $C_u, D_v, p, q$ both questions answer in positive,
then $C\lor D$ does not compute a read-once function.
Indeed,  in this case both the maxterm $T$ and the minterm $S$
include distinct variables $p,q$ and by 
Theorem~\ref{gurvich_criteria}, $C\lor D$ does not compute a read-once function.

Otherwise the algorithm
outputs the positive  answer 
and halts. 
We have to show that in this case
$C\lor D$ indeed computes a read-once function. 
For the sake of contradiction, assume that $C\lor D$ is not read-once.
By Theorem \ref{gurvich_criteria}
there is a maxterm $T$ of $C\lor D$ and a minterm
$S$ of  $C\lor D$ that have distinct common variables $p,q$.
By Lemma \ref{left_right_lemma} $S$ is either a
left or a right minterm. We will consider these two cases
separately.

\emph{Case 1:} $S$ is a right set, say $S = D_j$. 
Then  $T$ contains some clause $C_u$ of $C$.
We claim that $S \cap T\subset C_u$.
Indeed, otherwise $S\cap T$ would include a variable $y\notin C_u$. 
Then $C(T\setminus \{y\} \to 0) = 0$, as
$C_u \subset T\setminus \{y\}$.
And $D(T\setminus \{y\} \to 0) = 0$, since
$D$ is read-once and $T\cap D_j$ has at least 2 variables.
We obtain  contradiction, as $T$ is a maxterm of $C\lor D$.

Thus $T\cap S \subset C_u$.
Hence there are two distinct variables $p$ and $q$ from
$C_u$ such that $\{p, q\} \subset S $. Hence 
the algorithm must have halted on Step 3,
a contradiction.

\emph{Case 2:} $S$ is a left set.
Again there is a clause $C_u\subset T$.
Since $S$ is a left set, $S$ and $C_u$ have exactly one common
variable $q$.
By our assumption $T\cap S$ includes another variable
$p\ne q$. Note that $p\notin C_u$ because otherwise $S\cap C_u$
has more than one variable.

Let  $D_v$ be the (unique) right set 
that includes $p$ (here we use the assumption
that every variable is in some right set).
We claim that $D_v$ and $C_u$ are disjoint.
For the sake of contradiction assume that there is a variable
$y\in C_u \cap D_v$.
Then $(C\lor D)(T\setminus \{p\}\to 0) = 0$.
Indeed,
$C(T\setminus \{p\} \to 0) = 0$, as $p\notin C_u$ and hence $C_u\subset T\setminus \{p\}$. And $D(T\setminus \{p\}\to 0) = 0$, as
$y\ne p$ (since $y$ is in $C_u$ and $p$ is not) and hence   
$T\setminus \{p\}$ still intersects $D_v$.

Thus $T$ is not a maxterm, and the contradiction shows
that  $D_v$ and $C_u$ are disjoint.
Therefore, we have found $C_u, p, q, D_v$ such that $q\in C_u$, $p\in D_v$, $C_u$ and $D_v$ are disjoint and there is a left minterm $S$ and a maxterm $T$ with $\{p, q\} \subset S, C_u\cap \{p\} \subset T$.
Hence the algorithm just have halted on Step 4, a contradiction.
\textbf{(End of Algorithm.)}

\section{Lower bound}

We start this section with the proof of our variant of the Sausage Lemma (Theorem \ref{sausage_lemma}).

\begin{proof}[of Theorem \ref{sausage_lemma}]
We reduce from a clique problem. Assume that we are given a simple undirected graph $G = (V, E)$ and an  integer $k$ and we want to know whether there is a clique of size $k$ in $G$. The reduction is divided into two steps. On step 1 we introduce an auxiliary game between 3 players, Alice, Bob and Merlin. In this game Alice and Bob cooperatively play against Merlin. The rules of the game depend on $G$, $k$. The structure of this game resembles so-called GKS communication games (see \cite{gilmer2015new}). We will show that there is a clique of size $k$ in $G$ if and only if Alice and Bob have a winning strategy in a game. On step 2 of the reduction we will construct (in polynomial time) a $\bigwedge-\bigvee-\bigwedge$ depth-3 read-once monotone Boolean formula $\Psi$ with the following feature: Alice and Bob have a winning strategy in a game if and only if $\Psi\to D_n$ is not a tautology. The value of $n$ will  depend on $k$ and on the size of $G$. 

\emph{The game.} There is a room with a tape consisting of $k$ blank cells. The game has two phases. During the first phase Alice and Merlin are in the room and Bob is outside. He does not see what happens in a room. Alice and Merlin interact according to the following protocol. At the beginning Merlin points out to one of $k$ cells. Then Alice writes some vertex $u$ of $G$ in this cell. Finally, Merlin  writes some vertex $v$ of $G$ in some other cell. The only restriction is that $v$ should not be connected with $u$ by an edge of $G$ (for example, $v$ can be equal to $u$, we assume that there are no self-loops in $G$).  The first phase is finished.

Then, during the second phases, Bob comes in and Alice with Merlin are outside. Bob sees that exactly two cells of a tape are not blank --- each of these two cells contains a vertex of a graph. To win, Bob should determine which cell was touched first. More precisely, the cell which was touched first is the one to which Merlin pointed out in the beginning and in which Alice wrote her vertex.

More formally, let $\mathrm{CONF}$ denote the following set:
$$\mathrm{CONF} = \{\{(i, u), (j, v)\} : i, j\in\{1, \ldots, k\}, u, v\in V, i\neq j, \{u, v\} \notin E \},$$

Note that $\mathrm{CONF}$ encodes all possible configurations Bob can see when he comes in. 

A strategy of Alice in this game  is a function $S_A: \{1, 2, \ldots, k\} \to V$, a strategy of Bob is a function $S_B: \mathrm{CONF} \to \{1, 2, \ldots, k\}$
and a strategy of Merlin is a triple $(i, S_M^1, S_M^2)$, where $i\in\{1, 2, \ldots, k\}$ and 
$$S_M^1: V \to \{1, 2, \ldots, k\}, \qquad S_M^2: V \to V.$$
This triple should satisfy the following condition: for every $u\in V$ it holds that $S_M^1(u) \neq i$ and $\{u, S_M^2(u)\} \notin E$. We call a pair $(S_A, S_B)$  a winning strategy, if for every Merlin's strategy $(i, S_M^1, S_M^2)$ it holds that
$$S_B(\{\,\,(i, S_A(i)),\,\, (S_M^1(S_A(i)), S_M^2(S_A(i)))\,\,\}) = i.$$

Let us show that if there is clique of size $k$ in $G$, then Alice and Bob have a winning strategy in the above game. Indeed, assume that $w_1, \ldots, w_k\in V$ form a $k$-clique in $G$.  Alice's strategy is the following: if Merlin points out to the $i^{th}$ cell, then she writes $w_i$ in it. Assume that after that Merlin picks $j^{th}$ cell and writes something in it.  How can Bob distinguish the $i^{th}$ cell from the $j^{th}$ cell? The point is that for $v$ which is written by Merlin in the $j^{th}$ cell we have that $v \neq w_j$, because $w_i$ and $w_j$ are connected by an edge. This allows Bob to win.

On the other hand, assume that Alice and Bob have a winning strategy in a game. Let $v_i$ be the vertex which Alice writes according to this strategy when Merlin points out to the $i^{th}$ cell. Let us show that $v_1, \ldots, v_k$ form a clique in $G$. For contradiction assume that there is $i\neq j$ such that $v_i$ and $v_j$ are not connected by edge. Note that there are two  strategies of Merlin resulting into the same (from the Bob's point of view) configuration,  namely:
\begin{itemize}
\item Merlin points to the  $i^{th}$ cell and then writes $v_j$ in the $j^{th}$ cell;
\item Merlin points to the $j^{th}$ cell and then writes $v_i$ in the $i^{th}$ cell.
\end{itemize}
Both these strategies are legal since $\{v_i, v_j\} \notin E$. This contradicts the fact that from this configuration Bob can determine the cell which was touched first.

Now let us proceed to the second step of a reduction. There are exactly $2 \cdot |\mathrm{CONF}|$ outcomes of Alice-Merlin interaction (for each $c\in\mathsf{CONF}$ there are two outcomes resulting in $c$). Introduce a Boolean variable for each possible outcome.  Namely, let $x^{i, u}_{j, v}$ correspond to an outcome in which Merlin picked the $i^{th}$ cell, Alice  wrote $u$ in the $i^{th}$ cell and then Merlin wrote $v$ in the $j^{th}$ cell.

Every assignment to these variables, on which the value of  $D_n$ equals 0, determines Bob's strategy. Here
$D_n = \bigvee_{\{(i, u),  (j, v)\}\, \in \, \mathrm{CONF}} x^{i, u}_{j, v} \land x^{j, v}_{i, u}$. 
Namely, for $c = \{(i, u), (j, v)\}$ define:
$$S_B(c) = \begin{cases} i & \mbox{if $x^{i, u}_{j, v} = 1$,} \\ j & \mbox{if $x^{j, v}_{i, u} = 1$.} \end{cases}$$
For technical reasons, if both $x^{i, u}_{j, v}$ and $x^{j, v}_{i, u}$ are set to 0, define $S_B(c)$ somehow in such a way that $S_B(c) \neq i, j$. Note that it never happens that both $x^{i, u}_{j, v}$ and $x^{j, v}_{i, u}$ are set to 1 (otherwise $S_B(c)$ would be undetermined for $c = \{(i, u), (j, v)\}$). It is also clear that every Bob's strategy can be encoded in this way by some assignment of variables on which $D_n = 0$.

Once  a Bob's strategy (or equivalently, an assignment to $x^{i, u}_{j, v}$) is fixed, the game is just between Alice and Merlin.  Alice wins if and only if an outcome of her interaction with Merlin corresponds to a variable which is set to 1.   Indeed, if an outcome  corresponds to $x^{i, u}_{j, v}$, then Bob  should output $i$ and it happens if and only if $x^{i, u}_{j, v}$ is set to 1 (here in $(\Longrightarrow)$-direction we use the fact that if both  $x^{i, u}_{j, v}$ and $x^{j, v}_{i, u}$ are set to 0, then $S_B(c) \neq i, j$). This means that Alice has a winning strategy if and only if $\Psi = 1$, where
$$\Psi =\ \bigwedge_{i = 1}^k \bigvee_{u\in V} \bigwedge_{\substack{(j, v):\\ \{(i, u), (j, v)\}\in\mathrm{CONF}}} x^{i, u}_{j, v}$$
($\bigwedge$-gates correspond to Merlin's moves and $\bigvee$-gate corresponds to Alice's move). 
In other words, Alice and Bob have a winning strategy if and only if there is an assignment to $x^{i, u}_{j, v}$ such that $D_n$ is false and $\Psi$ is true or, equivalently, $\Psi\to D_n$ is not a tautology.
\qed
\end{proof}

To derive Corollary \ref{lower_bound} we need the following simple Lemma.

\begin{lemma}
\label{simple_lemma}
A function $$f(w_1, w_2, w_3, w_4) = w_2 w_3 w_4 \lor w_1 w_3 w_4 \lor w_1 w_2 w_4 \lor w_1 w_2 w_3$$
is not read-once.
\end{lemma}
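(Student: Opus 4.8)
The plan is to invoke Gurvich's criterion (Theorem~\ref{gurvich_criteria}): to prove that $f$ is \emph{not} read-once, it suffices to exhibit a single minterm $S$ and a single maxterm $T$ of $f$ with $|S\cap T|\ge 2$. So the whole argument reduces to identifying the minterms and maxterms of $f$ and spotting a pair that overlaps in two points.

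First I would note that $f$ is nothing but the threshold function that outputs $1$ exactly when at least three of the four variables $w_1,w_2,w_3,w_4$ are set to $1$; indeed, each of the four disjuncts is a product of three of the variables, and by monotonicity the DNF is satisfied precisely when some three variables are all $1$. From this description the minterms are immediate: they are the four $3$-element subsets $\{w_2,w_3,w_4\}$, $\{w_1,w_3,w_4\}$, $\{w_1,w_2,w_4\}$, $\{w_1,w_2,w_3\}$, since setting exactly these three variables to $1$ satisfies $f$ while no proper subset does.

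Next I would determine the maxterms. Recall that $f=0$ iff at most two of the variables are $1$, equivalently iff at least two of them are $0$. Hence setting any two variables to $0$ forces $f=0$, whereas setting only one variable to $0$ leaves three variables free to be $1$ and so does not force $f=0$. Therefore the maxterms are exactly the six $2$-element subsets of $\{w_1,w_2,w_3,w_4\}$. The only mildly delicate point in the whole proof is making sure one translates ``maxterm'' correctly (minimal set of variables whose zeroing kills $f$) and reads off the right threshold, but for a symmetric function this is entirely mechanical.

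Finally I would produce the witnessing pair. Take the minterm $S=\{w_1,w_2,w_3\}$ and the maxterm $T=\{w_1,w_2\}$. Then $S\cap T=\{w_1,w_2\}$, so $|S\cap T|=2>1$. By Theorem~\ref{gurvich_criteria} a read-once function must have every minterm and every maxterm meeting in exactly one point, so the existence of this pair shows that $f$ is not read-once, completing the proof.
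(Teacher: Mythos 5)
Your proof is correct and matches the paper's argument exactly: the paper also exhibits the minterm $\{w_1,w_2,w_3\}$ and the maxterm $\{w_1,w_2\}$, which intersect in two points, and concludes via Theorem~\ref{gurvich_criteria}. Your additional observation that $f$ is the threshold-$3$ function is a clean way to justify the minterm/maxterm identification, which the paper leaves implicit.
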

\begin{proof}
Note that $\{w_1, w_2, w_3\}$ is a minterm of $f$ and $\{w_1, w_2\}$ is a maxterm of $f$. Hence by Theorem \ref{gurvich_criteria} $f$ is not read-once. 
\qed
\end{proof}

\begin{proof}[of Corollary \ref{lower_bound}]
Assume that $\Psi$ is a $\bigwedge-\bigvee-\bigwedge$ depth-3 read-once monotone Boolean formula over $x_1, y_1, \ldots, x_n, y_n$ and $D_n = \bigvee_{i = 1}^n x_i y_i$. It is enough to show that $\Psi\to D_n$ is a tautology if and only if $$(\Psi \land (w_1 w_3 \lor w_2 w_4) ) \land (D_n \lor w_1 w_2 \lor w_3 w_4)$$ computes a read-once function. Indeed, if $\Psi\to D_n$ is a tautology, then $$(\Psi \land (w_1 w_3 \lor w_2 w_4) ) \to  (D_n \lor w_1 w_2 \lor w_3 w_4)$$ is also a tautology. Hence
$$ (\Psi \land (w_1 w_3 \lor w_2 w_4) ) \land (D_n \lor w_1 w_2 \lor w_3 w_4) = \Psi \land (w_1 w_3 \lor w_2 w_4),$$
and the latter is a read-once formula.  On the other hand assume for contradiction that $\Psi\to D_n$ is not a tautology, but there is a read-once formula $\Phi(x_1, y_1, \ldots, x_n, y_n,  w_1, \ldots, w_4)$, which is logically equivalent to   $$(\Psi \land (w_1 w_3 \lor w_2 w_4) ) \land (D_n \lor w_1 w_2 \lor w_3 w_4).$$
Let $a_1, b_1, \ldots, a_n, b_n$ be an assignment to $x_1, y_1, \ldots, x_n, y_n$ such that $$\Psi(a_1, b_1 \ldots, a_n, b_n) = 1, \qquad D_n(a_1, b_1 \ldots, a_n, b_n) = 0.$$ Substitute $x_1\to a_1, y_1 \to b_1, \ldots x_n \to a_n, y_n \to b_n$ in $\Phi$. The resulting read-once formula will be
\begin{align*}
\Phi(a_1, b_1, \ldots, a_n, b_n, w_1, \ldots, w_4) &= (w_1 w_3 \lor w_2 w_4) \land (w_1 w_2 \lor w_3 w_4)\\
&= w_2 w_3 w_4 \lor w_1 w_3 w_4 \lor w_1 w_2 w_4 \lor w_1 w_2 w_3. 
\end{align*}
Thus we obtain a read-once formula for $w_2 w_3 w_4 \lor w_1 w_3 w_4 \lor w_1 w_2 w_4 \lor w_1 w_2 w_3$, this a contradiction with Lemma \ref{simple_lemma}.
\qed
\end{proof}

\subsubsection*{Acknowledgments}

The author would like to thank Alexander Shen and Nikolay Vereshchagin for help in writing this paper. The author would like to thank Vladimir Gurvich for pointing out to \cite{boros2009generating}.
%
%

\appendix

\section*{Appendix}

\section{Proof of Lemma~\ref{two_clauses_lemma}}
  Let  $T$ be a maxterm of $C\lor D$ and $C_u, C_v \subset T$.
  For the sake of contradiction assume that there is $j\in\{1, \ldots, l\}$
  such that $|(C_u\cup C_v) \cap D_j| \ge 2$. Pick any two distinct
  $p, q\in (C_u\cup C_v) \cap D_j$. Let us show that
  $(C\lor D)(T\setminus \{q\} \to 0) = 0$.
  To show that $C(T\setminus \{q\} \to 0) = 0$ observe that $C_u$ or $C_v$
  does not contain $q$ and hence $C_u\subset T\setminus \{q\}$ or $C_v \subset T\setminus \{q\}$. To show that $D(T\setminus \{q\} \to 0) = 0$ observe  that $D(T\to 0) = 0$ and hence $T$ intersects all sets
  $D_1, \ldots, D_l$. Since $q\in D_j$ and hence $q\notin D_i$
  for all $i\ne j$, the set  
  $T\setminus\{q\}$ still intersects $D_i$
  for all $i\ne j$. And it intersects  $D_j$ since
  $p\in C_u\cup C_v\subset T$ and $p$ was not removed from
  $T$. Since $T$ is a maxterm, this is  a contradiction. 

  On the other hand, assume that for every $j\in \{1, \ldots, l\}$ it holds
  that $|(C_u\cup C_v) \cap D_j| \le 1$. We have to find a maxterm $T$
  that includes  both $C_u$ and $C_v$. 
  Start with $T=C_u\cup C_v$. Then for all $j$ such that $D_j$ does not
  intersect $C_u\cup C_v$ pick a variable from $D_j$ and include it in $T$. 
  In this way we make $T$ intersect every $D_j$ in exactly one point.
  In particular,  $D(T\to 0) = 0$ and $C(T\to 0) = 0$. On the other hand,
  every proper subset $T^\prime$ of $T$ is disjoint with at least one $D_j$ and
  hence $D(T^\prime\to 0) = 1$. This shows that $T$ is a maxterm.

\section{Proof of Lemma \ref{max_term_lemma_1}}
Since $C\to D$ is not a tautology, $C_i$ is non-empty and intersects with some $D_j$. Further, without loss of generality we may assume that:
\begin{itemize}
\item $i = 1$;
\item $C_1$ intersects with $D_1, \ldots D_r$ and $C_1$ is disjoint with $D_{r + 1}, \ldots D_l$ for some $1 \le r \le l$;
\item $C_2, \ldots, C_s$ all intersect with $D_1 \cup D_2 \cup \ldots \cup D_r$ and $C_{s + 1}, \ldots, C_m$ are all disjoint with $D_1 \cup D_2 \cup \ldots \cup D_r$ for some $1 \le s \le m$.
\end{itemize}

From the fact that $D_1 \cup D_2 \cup \ldots \cup D_l = \{x_1, \ldots, x_n\}$ we may derive that:
\begin{equation}
\label{appendix_subset_2}
C_{s + 1}, \ldots, C_{m} \subset D_{r + 1} \cup \ldots \cup D_l.
\end{equation}

 Define an auxiliary CNF $\WC = C_{s + 1} \land \ldots \land C_{m}$ and an auxiliary DNF $\WD = D_{r + 1} \lor \ldots \lor D_l$. Note that $\WC$ and $\WD$ are over variables from $D_{r + 1} \cup \ldots \cup D_l$ (this follows from \eqref{appendix_subset_2}).

We claim that there exists $T$ such that $T$ is a maxterm of $C\lor D$ and $C_1\subset T$ if and only if $\WC \to \WD$ is not a tautology (the latter by Lemma \ref{polynomial_time_lemma} can be verified in polynomial time).

\textbf{($\Leftarrow$)}. Assume that $\WC\to \WD$ is not a tautology. Then there exists $\WT\subset D_{r + 1} \cup \ldots \cup D_l$ such that
\begin{equation}
\label{appendix_wt_1}
C_{s + 1} \not\subset \WT, \ldots, C_m \not\subset \WT;
\end{equation}
\begin{equation}
\label{appendix_wt_2}
|D_{r + 1} \cap \WT| = 1, \ldots, |D_{l} \cap \WT| = 1;
\end{equation}
(take minimal $\WT \subset D_{r + 1} \cup \ldots \cup D_l$ such that $\WC(\WT\to 0) = 1$ and $\WD(\WT\to 0) = 0$).

Let us show that $T = \WT \cup C_1$ is the maxterm of $C\lor D$. First of all, let us verify that $(C\lor D)(T\to 0) = 0$. Indeed,
\begin{itemize}
\item $C(T\to 0) = 0$ because $C_1 \subset T$;
\item $D(T\to 0) = 0$ because every $D_1, \ldots, D_l$ intersects with $T$; namely,  $D_1, \ldots D_r$ intersect $C_1$ and $D_{r + 1}, \ldots, D_l$ intersect $\WT$.
\end{itemize}

Now, assume that $T^\prime \subset T$ and $(C\lor D)(T^\prime \to 0) = 0$. Let us show that this is possible only when $T^\prime = T$.

Since $D(T^\prime \to 0) = 0$, we have that $T^\prime$ intersects with every $D_1, \ldots, D_l$. From the fact that $C_1$ is disjoint with $D_{r + 1}, \ldots, D_l$ and from \eqref{appendix_wt_2} it follows that $\WT \subset T^\prime$. 

 It remains to show that $C_1 \subset T^\prime$. This follows from the assumption that $C(T^\prime \to 0) = 0$. Indeed, then at least one clause of $C$ should be the subset of $T^\prime$. Assume that this clause is $C_u$. If $C_u \neq C_1$, then $C_u\subset \WT$. There are two cases:
\begin{itemize}
\item \emph{The first case.} Assume that $C_u \in \{C_2, \ldots, C_s\}$. Then $C_u\subset\WT \subset  D_{r + 1} \cup \ldots \cup D_l$ intersects with $D_1 \cup D_2 \cup \ldots \cup D_r$, but the latter is impossible.

\item \emph{The second case.} Assume that $C_u \in \{C_{s + 1}, \ldots, C_m\}$.  This case contradicts \eqref{appendix_wt_1}.
\end{itemize}

\textbf{($\Rightarrow$)} Assume that $T$ is the maxterm of $C\lor D$ such that $C_1 \subset T$. Define $\WT = T\setminus C_1$. Later we will show that $\WC(\WT\to 0) = 1$, $\WD(\WT\to 0) = 0$ and hence $\WC\to \WD$ is not a tautology. But at first we should verify that $\WT \subset D_{r + 1} \cup \ldots \cup D_l$ (recall that $\WC, \WD$ are over variables from $D_{r + 1} \cup \ldots \cup D_l$).

To show that $\WT \subset D_{r + 1} \cup \ldots \cup D_l$ assume for contradiction that $\WT$ intersects $D_1 \cup D_2  \cup \ldots \cup D_r$ and let $q$ be the variable which lies in their intersection. Note that $q\notin C_1$ (this is because $q\in \WT = T\setminus C_1$). Let us demonstrate that for such $q$ we have that $(C\lor D)(T\setminus \{q\} \to 0) = 0$ (this is already a contradiction since $T$ is a maxterm). Indeed, $C(T\setminus \{q\} \to 0) = 0$ since $C_1 \subset T\setminus \{q\}$. Further, we should show that $T\setminus \{q\}$ intersects every $D_1, \ldots D_l$ and hence $D(T\setminus \{q\} \to 0) = 0$. Indeed: 
\begin{itemize}
\item $T\setminus \{q\}$ intersects $D_1, \ldots, D_r$ because of $C_1$;
\item $T\setminus \{q\}$ intersects $D_{r + 1}, \ldots, D_{l}$ because of the following two reasons: (a) $D(T\to 0) = 0$ and hence $T$ intersects every $D_{r + 1}, \ldots, D_l$; (b) $q$ is not in $D_{r + 1}\cup\ldots \cup D_l$.
\end{itemize}

Thus it remains to show that $\WC(\WT\to 0) = 1$ and $\WD(\WT\to 0) = 0$. To show that the first equality is true  assume for contradiction that there is $C_u \in \{C_{s + 1}, \ldots, C_m\}$ such  that $C_u \subset \WT = T\setminus C_1$. But then $C_u \subset T$.  This contradicts the assumption that there is no maxterm of $C\lor D$ which contains two distinct clauses of $C$.

To show that the second equality ($\WD(\WT\to 0) = 0$) is true, observe that $\WT$ intersects every $D_{r + 1}, \ldots, D_l$. This is true because $D(\WT\to 0) = 0$ and hence $T$ intersects $D_{r + 1}, \ldots, D_l$; but $C_1$ by assumption is disjoint with $D_{r + 1}, \ldots, D_l$ and hence $T\setminus C_1$ still intersects them.

\section{Proof of Lemma~\ref{minterm_lemma}}

If there is $C_i$ such that $a, b\in C_i$ or $\{a, b\} \not\subset C_1 \cup\ldots \cup C_m$, then no left minterm $S$ can contain both $a$ and $b$. From now we assume that this is not the case, i.e. there is no $C_i$ which contains both $a$ and $b$  and $a, b\in C_1\cup \ldots \cup C_m$. 	Let $\WC \lor \WD$ be obtained from $C\lor D$ by setting $a, b$ to 1. In other words, $\WC$ is obtained from $C$ by erasing all clauses containing $a$ or $b$ and $\WD$ is obtained from $D$ by erasing $a$ and $b$. Assume without loss of generality that $C_1,  C_2$ are erased clauses, $a\in C_1, b\in C_2$ and $D_1, \ldots, D_r$ are conjunctions containing $a$ or $b$ (note that $r$ is either 1 or 2). Then $\WC$ and $\WD$ can be written as
$$\WC = C_{3} \land \ldots \land C_m,$$
$$\WD = (D_1\setminus \{a, b\}) \lor \ldots \lor (D_r\setminus \{a, b\}) \lor D_{r + 1} \lor \ldots \lor D_l.$$

 We assert that there is a left minterm $S$ containing $a$ and $b$ iff $\WC \to \WD$ is not a tautology. The latter by Lemma \ref{polynomial_time_lemma} can be verified in polynomial times.

\textbf{($\Leftarrow$)} Assume that $\WC \to \WD$ is not a tautology. Take minimal $\WS \subset \{x_1, \ldots, x_n\}\setminus \{a, b\}$ such that $\WC(\WS \to 1) = 1, \WD(\WS\to 1) = 0$. Obviously, such $\WS$ satisfies the following two conditions:
\begin{equation}
\label{min_t_1}
\WS\subset C_{3} \cup \ldots \cup C_m, \qquad |\WS \cap C_{3}| = 1, \ldots, |\WS \cap C_m| = 1,
\end{equation}

\begin{equation}
\label{min_t_2}
D_1\setminus \{a, b\} \not\subset \WS, \ldots, D_r\setminus \{a, b\} \not\subset \WS,\,\, D_{r + 1} \not\subset \WS, \ldots, D_l\not\subset \WS. 
\end{equation}

Now, define $S = \WS \cup \{a, b\}$. Let us show that $S$ is a left minterm of $C\lor D$. From \eqref{min_t_2} it follows that there is no $j \in \{1, \ldots, l\}$ such that $D_j\subset S$. Hence $S$ contains no right set as a proper subset. Thus it remains to show by Lemma \ref{left_right_lemma} that $S$ is a left set. Since $a, b$ are from $C_1 \cup \ldots \cup C_m$, we have that 
$$S\subset (\{a, b\} \cup C_{3} \cup \ldots \cup C_m) \subset C_1 \cup \ldots \cup C_m.$$

Moreover, $S$ intersects every clause of $C$ in exactly one point. For $C_{3}, \ldots, C_m$ this follows from \eqref{min_t_1} and from the fact that $C_{3}, \ldots, C_m$ contain neither $a$ nor $b$. For $C_1, C_2$ this is true because: (a) $\WS$ is disjoint with $C_1, C_2$; (b) $a\in C_1, b\in C_2$.

\textbf{($\Rightarrow$)} Assume that there is a left minterm $S$ of $C\lor D$ containing $a$ and $b$. Define $\WS = S\setminus \{a, b\}$. Let us show that $\WS$ intersects every $C_3, \ldots C_m$. Indeed, this is true for $S$ and $a, b$ are not from $C_3\cup \ldots \cup C_m$. Hence $\WC(\WS\to 1) = 1$. On the other hand, $\WD(\WS\to 1) = 0$, since:
\begin{itemize}
\item $D_1\setminus \{a, b\} \not \subset \WS, \ldots D_r\setminus \{a, b\}\not\subset \WS$ because otherwise at least one $D_1, \ldots, D_r$ is the subset of $S = \WS \cup \{a, b\}$;
\item $D_{r + 1}\not\subset \WS, \ldots, D_l \not\subset \WS$ because it is true even for $S$.
\end{itemize}
Thus $\WC\to \WD$ is not a tautology.

\section{Proof of Lemma \ref{max_term_lemma_2}}
Without loss of generality we may assume that:
\begin{itemize}
\item $i = j = 1$;
\item $C_1$ intersects with $D_2, \ldots D_r$ and $C_1$ is disjoint with $D_{r + 1}, \ldots D_l$ for some $1 \le r \le l$;
\item $C_2, \ldots, C_s$ all intersect with $D_1\setminus \{p\} \cup D_2 \cup \ldots \cup D_r$ and $C_{s + 1}, \ldots, C_m$ are all disjoint with $D_1\setminus \{p\} \cup D_2 \cup \ldots \cup D_r$ for some $1 \le s \le m$.
\end{itemize}

From the fact that $D_1 \cup D_2 \cup \ldots \cup D_l = \{x_1, \ldots, x_n\}$ we may derive that:
\begin{equation}
\label{subset_2}
C_{s + 1}, \ldots, C_{m} \subset \{p\} \cup D_{r + 1} \cup \ldots \cup D_l.
\end{equation}

Now, define:
$$\WC_{s + 1} = C_{s + 1}\setminus \{p\}, \ldots, \WC_{m} = C_{m}\setminus \{p\}.$$

Further, define an auxiliary CNF $\WC = \WC_{s + 1} \land \ldots \land \WC_{m}$ and an auxiliary DNF $\WD = D_{r + 1} \lor \ldots \lor D_l$. Note that $\WC$ and $\WD$ are over variables from $D_{r + 1} \cup \ldots \cup D_l$ (this follows from \eqref{subset_2}).

We claim that there exists $T$ such that $T$ is a maxterm of $C\lor D$ and $C_1\cup \{p\}\subset T$ if and only if $\WC \to \WD$ is not a tautology (the latter by Lemma \ref{polynomial_time_lemma} can be verified in polynomial time).

\textbf{($\Leftarrow$)}. Assume that $\WC\to \WD$ is not a tautology. Then there exists $\WT\subset D_{r + 1} \cup \ldots \cup D_l$ such that
\begin{equation}
\label{wt_1}
\WC_{s + 1} \not\subset \WT, \ldots, \WC_m \not\subset \WT;
\end{equation}
\begin{equation}
\label{wt_2}
|D_{r + 1} \cap \WT| = 1, \ldots, |D_{l} \cap \WT| = 1;
\end{equation}
(take minimal $\WT \subset D_{r + 1} \cup \ldots \cup D_l$ such that $\WC(\WT\to 0) = 1$ and $\WD(\WT\to 0) = 0$).

Let us show that $T = \WT \cup C_1 \cup \{p\}$ is a maxterm of $C\lor D$. First of all, let us verify that $(C\lor D)(T\to 0) = 0$. Indeed,
\begin{itemize}
\item $C(T\to 0) = 0$ because $C_1 \subset T$;
\item $D(T\to 0) = 0$ because every $D_1, \ldots, D_l$ intersects with $T$; namely, $D_1$ contains $p$, $D_2, \ldots D_r$ intersect $C_1$ and $D_{r + 1}, \ldots, D_l$ intersect $\WT$.
\end{itemize}

Now, assume that $T^\prime \subset T$ and $(C\lor D)(T^\prime \to 0) = 0$. Let us show that this is possible only when $T^\prime = T$.

Since $D(T^\prime \to 0) = 0$, we have that $T^\prime$ intersects with every $D_1, \ldots, D_l$. From the fact that $C_1 \cup \{p\}$ is disjoint with $D_{r + 1}, \ldots, D_l$ and from \eqref{wt_2} it follows that $\WT \subset T^\prime$. Since $C_1$ and $\WT$ are disjoint with $D_1$ and $T^\prime$ intersects $D_1$, we have that $p\in T^\prime$.

 It remains to show that $C_1 \subset T^\prime$. This follows from the assumption that $C(T^\prime \to 0) = 0$. Indeed, then at least one clause of $C$ should be the subset of $T^\prime$. Assume that this clause is $C_u$. If $C_u \neq C_1$, then $C_u\subset \WT \cup \{p\}$. There are two cases:
\begin{itemize}
\item \emph{The first case.} Assume that $C_u \in \{C_2, \ldots, C_s\}$. Then $C_u\subset\WT\cup \{p\} \subset \{p\} \cup D_{r + 1} \cup \ldots \cup D_l$ intersects with $D_1\setminus \{p\} \cup D_2 \cup \ldots \cup D_r$, but the latter is impossible.

\item \emph{The second case.} Assume that $C_u \in \{C_{s + 1}, \ldots, C_m\}$. Then $\WC_u = C_u\setminus \{p\} \subset \WT$, and the latter contradicts \eqref{wt_1}.
\end{itemize}

\textbf{($\Rightarrow$)} Assume that $T$ is the maxterm of $C\lor D$ such that $C_1\cup\{p\} \subset T$. Define $\WT = T\setminus (C_1 \cup \{p\})$. Later we will show that $\WC(\WT\to 0) = 1$, $\WD(\WT\to 0) = 0$ and hence $\WC\to \WD$ is not a tautology. But at first we should verify that $\WT \subset D_{r + 1} \cup \ldots \cup D_l$ (recall that $\WC, \WD$ are over variables from $D_{r + 1} \cup \ldots \cup D_l$).

To show that $\WT \subset D_{r + 1} \cup \ldots \cup D_l$ assume for contradiction that $\WT$ intersects $D_1 \cup D_2  \cup \ldots \cup D_r$ and let $q$ be the variable which lies in their intersection. Note that $q\neq p$ and $q\notin C_1$ (this is because $q\in \WT = T\setminus (C_1 \cup \{p\})$). Let us demonstrate that for such $q$ we have that $(C\lor D)(T\setminus \{q\} \to 0) = 0$ (this gives us a contradiction since $T$ is a maxterm). Indeed, $C(T\setminus \{q\} \to 0) = 0$ since $C_1 \subset T\setminus \{q\}$. Further, we should show that $T\setminus \{q\}$ intersects every $D_1, \ldots D_l$ and hence $D(T\setminus \{q\} \to 0) = 0$. Indeed: 
\begin{itemize}
\item $T\setminus \{q\}$ intersects $D_1$ because of $p$;
\item $T\setminus \{q\}$ intersects $D_2, \ldots, D_r$ because of $C_1$;
\item $T\setminus \{q\}$ intersects $D_{r + 1}, \ldots, D_{l}$ because of the following two reasons: (a) $D(T\to 0) = 0$ and hence $T$ intersects every $D_{r + 1}, \ldots, D_l$; (b) $q$ is not in $D_{r + 1}, \ldots, D_l$.
\end{itemize}

Thus it remains to show that $\WC(\WT\to 0) = 1$ and $\WD(\WT\to 0) = 0$. To show that the first equality is true  assume for contradiction that there is $\WC_u \in \{\WC_{s + 1}, \ldots, \WC_m\}$ such  that $\WC_u \subset \WT = T\setminus (C_1 \cup \{p\})$. But $\WC_u = C_u\setminus \{p\}$ and hence $C_u \subset T$.  This contradicts the assumption that there is no maxterm of $C\lor D$ which contains two distinct clauses of $C$.

To show that the second equality ($\WD(\WT\to 0) = 0$) is true, observe that $\WT$ intersects every $D_{r + 1}, \ldots, D_l$. This follows from the following two observations: (a) $D(T\to 0) = 0$ and hence $T$ intersects every $D_{r + 1}, \ldots D_l$, (b) $\{p\}\cup C_1$ is disjoint with $D_{r + 1}, \ldots, D_l$.


\begin{thebibliography}{1}

\bibitem{elbassioni2011readability}
{\sc Elbassioni, K., Makino, K., and Rauf, I.}
\newblock On the readability of monotone boolean formulae.
\newblock {\em Journal of combinatorial optimization 22}, 3 (2011), 293--304.

\bibitem{gurvich2010it}
{\sc Gurvich, V.}
\newblock It is a conp-complete problem to decide whether a positive
  $\lor$-$\land$ formula of depth 3 defines a read-once or respectively
  quadratic boolean function.
\newblock {\em Rutcor Research Report\/} (2010).

\bibitem{golumbic2009read}
{\sc Golumbic, M.~C., and Gurvich, V.}
\newblock Read-once functions.
\newblock {\em Boolean Functions: Theory, Algorithms and Applications\/}
  (2009).

\bibitem{golumbic2008improvement}
{\sc Golumbic, M.~C., Mintz, A., and Rotics, U.}
\newblock An improvement on the complexity of factoring read-once boolean
  functions.
\newblock {\em Discrete Applied Mathematics 156}, 10 (2008), 1633--1636.



\bibitem{gurvich1977repetition}
{\sc Gurvich, V.~A.}
\newblock Repetition-free boolean functions.
\newblock {\em Uspekhi Matematicheskikh Nauk 32}, 1 (1977), 183--184.



\bibitem{buning1999propositional}
{\sc B{\"u}ning, H. and Lettmann, T.}
\newblock  Propositional logic: deduction and algorithms
\newblock {\em Cambridge University Press}, 48 (1999)

\bibitem{boros2009generating}
{\sc Boros, E., Elbassioni, K., Gurvich, V., and Makino, K.}
\newblock Generating Vertices of Polyhedra and Related Problems of Monotone Generation.
\newblock {\em CRM Proceedings and Lecture Notes}, 48 (2009)

\bibitem{gilmer2015new}
{\sc Gilmer, J., Kouck{\`y}, M., and Saks, M.}
\newblock A new approach to the sensitivity conjecture.
\newblock{\em Proceedings of the 2015 Conference on Innovations in Theoretical Computer Science}, (2015), 247--254.



\end{thebibliography}
\end{document}